 \def\map#1{\mathcal #1}
\def\<{\langle}\def\>{\rangle}
\def\Tr{\operatorname{Tr}}\def\:{\hbox{\bf
    :}}
\def\spc#1{\mathcal{#1}}
\def\rank{\mathsf{rank}}
\def\set#1{\mathsf{#1}}
\newtheorem{theo}{{Theorem}}
\crefname{theo}{Theorem}{Theorems}
\newtheorem{lem}{{Lemma}}
\crefname{lem}{Lemma}{Lemmas}
\newtheorem{prop}{{Proposition}}
\crefname{prop}{Proposition}{Propositions}
\crefname{cor}{Corollary}{Corollaries}
\newtheorem{definition}{{Definition}}
\crefname{definition}{Definition}{Definitions}
\newtheorem{assumption}{{Assumption}}
\crefname{assumption}{Assumption}{Assumptions}
\newcommand{\ket}[1]{\left|#1\right\rangle}
\newcommand{\kket}[1]{\left|#1\middle\rangle\!\right\rangle}
\newcommand{\skket}[1]{|#1\rangle\!\rangle}
\newcommand{\bra}[1]{\left\langle#1\right|}
\newcommand{\bbra}[1]{\left\langle\!\middle\langle#1\right|}
\newcommand{\sbbra}[1]{\langle\!\langle#1|}
\newcommand{\braket}[2]{\left\langle#1\middle|#2\right\rangle}
\newcommand{\ketbra}[1]{\ket{#1}\bra{#1}}
\newcommand{\kketbbra}[1]{\kket{#1}\bbra{#1}}
\newcommand{\skketbbra}[1]{\skket{#1}\sbbra{#1}}
\newcommand{\bbrakket}[2]{\left\langle\!\left\langle#1\middle|#2\right\rangle\!\right\rangle}
\newcommand{\ind}{\DataSty{ind}}
\newcommand{\True}{\KwSty{true}}
\newcommand{\False}{\KwSty{false}}
\newcommand{\independent}{\FuncSty{independence}}
\tikzset{tensor/.style={rectangle,color=black,draw=black,fill=white,thick,
                    inner sep=1pt,minimum size=5mm}}
\tikzset{tensor2h/.style={rectangle,color=black,draw=black,fill=white,thick,
                    inner sep=1pt,minimum width=5mm,minimum height=10mm}}
\tikzset{parameter/.style={rectangle,color=black,draw=black,fill=black!10,thick,
                    inner sep=1pt,minimum size=5mm}}
\tikzset{virtual/.style={rectangle,inner sep=1pt,minimum size=5mm}}
\tikzset{prepare/.style={rounded rectangle, rounded rectangle east arc=none,color=black,draw=black,fill=white,thick,inner sep=1pt,minimum size=5mm}}
\tikzset{measure/.style={rounded rectangle, rounded rectangle west arc=none,color=black,draw=black,fill=white,thick,inner sep=1pt,minimum size=5mm}}
\tikzset{
    triple3/.style args={[#1] in [#2] in [#3]}{
        #1,preaction={preaction={draw,#3},draw,#2}
    }
}
\tikzset{triple/.style={triple3={[line width=0.125mm,black] in [line width=2mm,white] in [line width=2.25mm,black]}}}
\newcommand{\measurement}{\draw (55:0.5) arc (55:125:0.5); \draw (80:0.25) -- (80:0.6);}
\newcommand{\drawground}{\draw[thick] (-0.15,0) -- (0.15,0);\draw[thick] (-0.10,-0.05) -- (0.10,-0.05);\draw[thick] (-0.05,-0.1) -- (0.05,-0.1);}
\newenvironment{mathtikz}[1][]{\begin{array}{c}\begin{tikzpicture}[#1]}{\end{tikzpicture}\end{array}}
\newif\ifpgf@rectanglewrc@donecorner@
\def\pgf@rectanglewithroundedcorners@docorner#1#2#3#4#5{%
  \edef\pgf@marshal{%
    \noexpand\pgfintersectionofpaths
      {%
        \noexpand\pgfpathmoveto{\noexpand\pgfpoint{\the\pgf@xa}{\the\pgf@ya}}%
        \noexpand\pgfpathlineto{\noexpand\pgfpoint{\the\pgf@x}{\the\pgf@y}}%
      }%
      {%
        \noexpand\pgfpathmoveto{\noexpand\pgfpointadd
          {\noexpand\pgfpoint{\the\pgf@xc}{\the\pgf@yc}}%
          {\noexpand\pgfpoint{#1}{#2}}}%
        \noexpand\pgfpatharc{#3}{#4}{#5}%
      }%
    }%
  \pgf@process{\pgf@marshal\pgfpointintersectionsolution{1}}%
  \pgf@process{\pgftransforminvert\pgfpointtransformed{}}%
  \pgf@rectanglewrc@donecorner@true
}
  \savedmacro\cornerradiusnw{%
    \edef\cornerradiusnw{\pgfkeysvalueof{/pgf/rectangle corner radius north west}}%
  }
  \savedmacro\cornerradiusne{%
    \edef\cornerradiusne{\pgfkeysvalueof{/pgf/rectangle corner radius north east}}%
  }
  \savedmacro\cornerradiussw{%
    \edef\cornerradiussw{\pgfkeysvalueof{/pgf/rectangle corner radius south west}}%
  }
  \savedmacro\cornerradiusse{%
    \edef\cornerradiusse{\pgfkeysvalueof{/pgf/rectangle corner radius south east}}%
  }
    \edef\pgf@marshal{%
      \noexpand\pgfpointborderrectangle
      {\noexpand\pgfqpoint{\the\pgf@xb}{\the\pgf@yb}}
      {\noexpand\pgfqpoint{\the\pgf@xc}{\the\pgf@yc}}%
    }%
    \pgfextract@process\borderpoint{}%
       \pgf@rectanglewithroundedcorners@docorner{\cornerradiusne}{0pt}{0}{90}{\cornerradiusne}%
\begin{document}
\title{
Efficient Algorithms for Causal Order Discovery in Quantum Networks
  }
\author{Ge Bai}
\affiliation{QICI Quantum Information and Computation Initiative, Department of Computer Science, The University of Hong Kong, Pokfulam Road, Hong Kong}
\affiliation{HKU-Oxford Joint Laboratory for Quantum Information and Computation}
\author{Ya-Dong Wu}
\affiliation{QICI Quantum Information and Computation Initiative, Department of Computer Science, The University of Hong Kong, Pokfulam Road, Hong Kong}
\affiliation{HKU-Oxford Joint Laboratory for Quantum Information and Computation}
\author{Yan Zhu}
\affiliation{QICI Quantum Information and Computation Initiative, Department of Computer Science, The University of Hong Kong, Pokfulam Road, Hong Kong}
\affiliation{HKU-Oxford Joint Laboratory for Quantum Information and Computation}
\author{Masahito Hayashi}
\affiliation{Shenzhen Institute for Quantum Science and Engineering, Southern University of Science and Technology, Shenzhen 518055, China \looseness=-1}
\affiliation{Guangdong Provincial Key Laboratory of Quantum Science and Engineering, Southern University of Science and Technology, Shenzhen 518055, China}
\affiliation{Shenzhen Key Laboratory of Quantum Science and Engineering, Southern University of Science and Technology, Shenzhen 518055, China}
\affiliation{Graduate School of Mathematics, Nagoya University, Nagoya, 464-8602, Japan}
\author{Giulio Chiribella}
\affiliation{QICI Quantum Information and Computation Initiative, Department of Computer Science, The University of Hong Kong, Pokfulam Road, Hong Kong}
\affiliation{Department of Computer Science, University of Oxford, Parks Road, Oxford OX1 3QD, United Kingdom\looseness=-1}
\affiliation{HKU-Oxford Joint Laboratory for Quantum Information and Computation}
\affiliation{Perimeter Institute For Theoretical Physics, 31 Caroline Street North, Waterloo N2L 2Y5, Ontario, Canada\looseness=-1}
\affiliation{The University of Hong Kong Shenzhen Institute of Research and Innovation, Yuexing 2nd Rd Nanshan, Shenzhen 518057, China}

\begin{abstract}
Given black-box access to the input and output systems, we develop the first efficient quantum causal order discovery algorithm with polynomial query complexity with respect to the number of systems. We model the causal order with quantum combs, and our algorithm outputs the order of inputs and outputs that the given process is compatible with.
Our algorithm searches for the last input and the last output in the causal order, removes them, and iteratively
repeats the above procedure until we get the order of all inputs and outputs. Our method guarantees a polynomial
running time for quantum combs with a low Kraus rank, namely processes with low noise and little information loss.
For special cases where the causal order can be inferred from
local observations, we also propose algorithms that have lower query complexity and only require local state preparation and local
measurements.
Our algorithms will provide efficient ways to detect and optimize available transmission paths in quantum communication networks, as well as methods to verify quantum circuits and to discover the latent structure of multipartite quantum systems.
\end{abstract}


\maketitle

\section{Introduction}

Quantum networks \cite{kimble2008quantum,elliott2002building} are the backbone of large-scale quantum communication and computation. The study of quantum networks provides tools to analyze interactions between users, quantum channels and devices, with applications to quantum key distribution \cite{elliott2002building}, quantum distributed computation \cite{buhrman2003distributed} and quantum cloud computing \cite{barz2012demonstration}.
To support the applications, an efficient and robust quantum network is indispensable.

To ensure the robustness of quantum communication, we need to keep the communication channel stable from noise and errors, which could be solved by quantum error-correcting codes \cite{steane1996error} and protocols resistant to misaligned reference frames \cite{bartlett2003classical,aolita2007quantum}.
However, at a larger scale, other sources of instability emerge from the structure of the network.
In a quantum communication network,
data are not usually transmitted in a simple point-to-point manner: parties are connected by
repeaters and routers that serve as intermediate transmission nodes, whose availability may be
frequently changing due to network traffic and environmental noise, causing dynamical changes
in the structure of the network. To determine the optimal path to transmit data, one has to
detect those structural changes frequently, and adaptively adjust the transmission paths.

Formally, the signalling of information can be modeled as cause-effect relations. The identification of cause-effect relations, i.e. causal order discovery, is crucial for a wide range of applications in science and society. This problem has been extensively studied in the classical scenario \cite{spirtes2000causation,heinze2018causal}, while a quantum version of the causal order discovery problem has been formulated in Ref. \cite{costa2016quantum}.
Basic cases involving causal relations between a few inputs and outputs has been studied in Refs. \cite{ried2015quantum,chiribella2019quantum}.
Ref. \cite{giarmatzi2018quantum} deals with the case of many inputs and outputs. It formulates quantum causal orders as quantum combs \cite{chiribella2008quantum}, and proposes a classical algorithm based on the full classical description of the process. However, in a quantum network that is frequently changing, the classical description is not known in advance, and obtaining such a description is practically difficult, often requiring a process tomography \cite{chuang1997prescription} which can take exponential time.


In this article, we adopt the formulation of quantum causal orders as quantum combs, and propose the first efficient quantum causal order discovery algorithm for many-system quantum processes with black-box queries to the process. 
Our algorithm searches for the last input and the last output in the causal order, removes them, and iteratively repeats the above procedure until we get  the order of all inputs and outputs.
Our method guarantees a polynomial running time for quantum combs with a low Kraus rank, namely processes with low noise and little information loss.
We also propose algorithms with a lower query complexity for cases where the causal order can be inferred from local observations, for example, when each input has a non-trivial
influence on all outputs after it, and when the comb is a tensor product of single-system channels. These algorithms only require local state preparation and local measurements, and the number of uses of the process could grow logarithmically with the number of input and output systems.


\section{Preliminaries}




For a Hilbert space $\spc{H}$, let $L(\spc{H})$ be the set of linear operators on $\spc{H}$ and $S(\spc{H})$ be the set of density operators. A quantum process is described by a completely positive trace-preserving (CPTP) linear map, $\map{C}$, also known as a quantum channel. We consider quantum processes involving multiple systems, and we label the input systems as $A_1,\dots,A_n$, and output systems as $B_1,\dots,B_n$. Such a quantum process is a map $\map{C}: L(\spc{H}_{A_1}\otimes\dots\otimes\spc{H}_{A_n}) \to L(\spc{H}_{B_1}\otimes\dots\otimes\spc{H}_{B_n})$ from the tensor product of input Hilbert spaces to the tensor product of output Hilbert spaces.

To describe a quantum process with a certain causal structure, we adopt the notion of {\em quantum combs} \cite{chiribella2008quantum}:
\begin{definition}\cite{chiribella2008quantum}
A {\em quantum comb} is a concatenation of $n$ channels with memory as shown in \autoref{fig:comb}, where each channel is called a {\em tooth} of the comb. Define $\set{Comb}[(A_1,B_1),\dots,(A_n,B_n)]$ as the set of quantum combs with $n$ teeth, where the $i$-th tooth has input $A_i$ and output $B_i$  as shown in \autoref{fig:comb}.
\end{definition}
\begin{figure}[H]
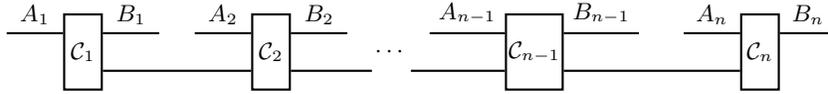

$$\begin{mathtikz}
    \def\nx{{"","1","2","n-2","n-1","n"}}
    \node[tensor2h] (U1) at (1,0) {$\map{C}_1$};
    \node[tensor2h] (U2) at (3.5,0) {$\map{C}_2$};
    \node[virtual] (U3) at (5.1,0) {$\cdots$};
    \node[tensor2h] (U4) at (7,0) {$\map{C}_{n-1}$};
    \node[tensor2h] (U5) at (10,0) {$\map{C}_n$};
    \def\lasti{1};
    \foreach \i[remember=\i as \lasti] in {2,3,4,5}
        \draw[thick] ($(U\lasti.east)-(0,0.25)$) -- ($(U\i.west)-(0,0.25)$);
    \foreach \i in {1,2,5}
    {
        \draw[thick] ($(U\i.east)+(0,0.25)$) -- node[above]{$B_{\pgfmathparse{\nx[\i]}\pgfmathresult}$} +(0.75,0);
        \draw[thick] ($(U\i.west)+(0,0.25)$) -- node[above]{$A_{\pgfmathparse{\nx[\i]}\pgfmathresult}$} +(-0.75,0);
    }
    \foreach \i in {4}
    {
        \draw[thick] ($(U\i.east)+(0,0.25)$) -- node[above]{$B_{\pgfmathparse{\nx[\i]}\pgfmathresult}$} +(1,0);
        \draw[thick] ($(U\i.west)+(0,0.25)$) -- node[above]{$A_{\pgfmathparse{\nx[\i]}\pgfmathresult}$} +(-1,0);
    }
\end{mathtikz}$$
\caption{\label{fig:comb}Quantum comb with $n$ teeth.}
\end{figure}

The following set of conditions determines whether a channel $\map C$ is a quantum comb with a given order of inputs and outputs. Let $d_{A_i}:=\dim \spc{H}_{A_i}$ and $d_{B_j}:=\dim \spc{H}_{B_j}$ be the dimensions of each input and output system. The conditions are based on the Choi state \cite{choi1975completely} of a channel, defined as $C := (\map{C}\otimes\map{I})(\kketbbra{I}/d_{A_1,...,A_n})$ where $\kket{I} := \sum_{i=1}^n \ket{i}\ket{i}  \in \bigotimes_{i=1}^n \spc{H}_{A_i} \otimes \bigotimes_{i=1}^n \spc{H}_{A_i}$ and $d_{A_1,...,A_n}:=\prod_{i=1}^n d_{A_i}$.
We write state $C$ with a subscript to denote the marginal state of $C$ on systems listed in the subscript. For example, $C_{A_1,\dots,A_n,B_1,\dots,B_k}:= \Tr_{B_{k+1},\dots,B_n}[C]$.
\begin{prop} \label{prop:comb}  \cite{chiribella2008quantum}
    Let $C$ be the Choi state of a channel $\map{C}$. $\map{C} \in \set{Comb}[(A_1,B_1),\dots,(A_n,B_n)]$ if and only if
    \begin{align}\label{eq:comb}
        C_{A_1,\dots,A_n,B_1,\dots,B_k} = C_{A_1,\dots,A_k,B_1,\dots,B_k} \otimes \frac{I_{A_{k+1},\dots,A_{n}}}{\prod_{i=k+1}^n d_{A_i}}, ~ \forall k=0,\dots,n-1
    \end{align}
    where the equation reads $C_{A_1,\dots,A_n} = I_{A_1,\dots,A_n}/\prod_{i=1}^n d_{A_i}$ for $k=0$.
\end{prop}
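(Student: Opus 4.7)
The plan is to prove the two implications separately, handling the ``only if'' direction by direct computation and the ``if'' direction by induction on $n$.

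For the ``only if'' direction, I would assume that $\map{C}$ admits a comb decomposition $\map{C}_n \circ \cdots \circ \map{C}_1$ with memory systems $M_1,\ldots,M_{n-1}$ wired between consecutive teeth, where each $\map{C}_j : \spc{H}_{M_{j-1}} \otimes \spc{H}_{A_j} \to \spc{H}_{B_j} \otimes \spc{H}_{M_j}$ ($M_0$ and $M_n$ trivial). The Choi state $C$ is then the link product of the Choi states of the individual teeth. To evaluate $C_{A_1,\ldots,A_n,B_1,\ldots,B_k}$, I would apply $\Tr_{B_j}$ for $j=n,n-1,\ldots,k+1$ in sequence. Since each $\map{C}_j$ is trace-preserving, $\Tr_{B_j \otimes M_j} \circ \map{C}_j = \Tr_{M_{j-1} \otimes A_j}$, so on Choi states this peels off tooth $j$, producing an $I_{A_j}/d_{A_j}$ factor on the $A_j$ wire and collapsing the $M_{j-1}$ output of tooth $j-1$. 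After $n-k$ iterations, the Choi state of the first $k$ teeth remains on $A_1,\ldots,A_k,B_1,\ldots,B_k$, tensored with $I_{A_{k+1},\ldots,A_n}/\prod_{i=k+1}^n d_{A_i}$, which is Eq.~(\ref{eq:comb}).

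For the ``if'' direction, I would induct on $n$. The base case $n=1$ is immediate: the condition $C_{A_1}=I_{A_1}/d_{A_1}$ is exactly the trace-preservation of $\map{C}$, so $\map{C}$ itself is a one-tooth comb. For $n \ge 2$, I would restrict to the marginal $C' := C_{A_1,\ldots,A_{n-1},B_1,\ldots,B_{n-1}}$. Taking the $A_n$ partial trace of Eq.~(\ref{eq:comb}) for each $k < n-1$ yields exactly the depth-$(n-1)$ comb conditions for $C'$, while the $k=n-1$ condition $C_{A_1,\ldots,A_n,B_1,\ldots,B_{n-1}} = C' \otimes I_{A_n}/d_{A_n}$ will be used separately to build the final tooth. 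The inductive hypothesis applied to $C'$ then supplies teeth $\map{C}_1,\ldots,\map{C}_{n-1}$ with internal memories whose composition realizes the CPTP map corresponding to $C'$.

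The main obstacle is to construct the final tooth $\map{C}_n : \spc{H}_{M_{n-1}} \otimes \spc{H}_{A_n} \to \spc{H}_{B_n}$ so that attaching it to the memory output of the first $n-1$ teeth reproduces $\map{C}$. To achieve this, I would enlarge the memory output $M_{n-1}$ of the comb realizing $C'$ by passing to a Stinespring dilation in which the environments of the first $n-1$ teeth are absorbed into $M_{n-1}$; this guarantees that the resulting Choi state has full support on $M_{n-1}$. The tooth $\map{C}_n$ is then defined as the completely positive map whose link product with this dilated Choi state yields $C$; its complete positivity is inherited from that of $C$, while its trace-preservation reduces precisely to the $k=n-1$ factorization $C_{A_1,\ldots,A_n,B_1,\ldots,B_{n-1}} = C' \otimes I_{A_n}/d_{A_n}$. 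The delicate points are choosing the Stinespring dilation so that the conditional map is well-defined on the full memory support, and verifying that the link product reconstructs $C$ exactly on all systems; both are ensured by the nested factorization imposed by Eq.~(\ref{eq:comb}) across all levels of $k$.
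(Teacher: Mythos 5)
The paper does not prove this proposition: it is quoted verbatim from the cited reference \cite{chiribella2008quantum}, so there is no in-paper proof to compare against. Your outline is essentially the standard realization-theorem argument from that reference, and it is sound. The ``only if'' direction via trace-preservation of each tooth (so that tracing $B_j$ peels off tooth $j$ and leaves $I_{A_j}/d_{A_j}$) is exactly right. For the ``if'' direction, your induction is correctly set up --- tracing $A_n$ out of \Eq{eq:comb} for $k<n-1$ does give the depth-$(n-1)$ conditions for $C'$, and the $k=0$ condition guarantees $C'$ is again a Choi state of a channel --- and the construction of the last tooth is the right idea. The one place where your plan is thinner than it should be is the sentence defining $\map{C}_n$ as ``the map whose link product with the dilated Choi state yields $C$'': existence of such a CPTP map is the crux, and it follows not from inverting a link product but from the standard purification lemma --- since the $k=n-1$ condition says $C$ is an extension of $C'\otimes I_{A_n}/d_{A_n}$, and the dilated comb output together with $\ket{\Phi_+}_{A_nA'_n}$ purifies $C'\otimes I_{A_n}/d_{A_n}$, any two purifications are related by an isometry on the purifying system, which yields a channel $\map{C}_n:\spc{H}_{M_{n-1}}\otimes\spc{H}_{A_n}\to\spc{H}_{B_n}$ reproducing $C$; trace preservation is then automatic rather than a separate condition to verify. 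Phrased that way, the ``delicate points'' you flag (support of the memory, exact reconstruction of $C$) are handled uniformly, and the argument closes.
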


Given a quantum channel $\map{C}: L(\spc{H}_{A_1}\otimes\dots\otimes\spc{H}_{A_n}) \to L(\spc{H}_{B_1}\otimes\dots\otimes\spc{H}_{B_n})$, our goal is to discover its causal order, namely to identify an ordering of the inputs and outputs, $(A_{\sigma(1)},B_{\pi(1)}),\dots,(A_{\sigma(n)},B_{\pi(n)})$ with $\sigma$ and $\pi$ being permutations of $\{1,\dots,n\}$, such that
\begin{align} \label{eq:goal_exact}
    \map{C} \in \set{Comb}[(A_{\sigma(1)},B_{\pi(1)}),\dots,(A_{\sigma(n)},B_{\pi(n)})].
\end{align}
To ensure that the problem always has an answer, we assume that $\map{C}$ is guaranteed to be a quantum comb with some ordering, namely  $\map{C} \in \set{Comb}[(A_{\sigma'(1)},B_{\pi'(1)}),\dots,(A_{\sigma'(n)},B_{\pi'(n)})]$, where $\sigma'$ and $\pi'$ are unknown permutations so that $\sigma=\sigma', \pi=\pi'$ is always an answer for Eq. (\ref{eq:goal_exact}).

To allow for noise and errors that are inevitable in any quantum operation, we formalize an approximate version of Eq. (\ref{eq:goal_exact}), that we only require $\map{C}$ to approximately have the causal order $(A_{\sigma(1)},B_{\pi(1)}),\dots,(A_{\sigma(n)},B_{\pi(n)})$. Let $\varepsilon$ be an error threshold, we relax our goal as finding permutations $\sigma$ and $\pi$ such that
\begin{align} \label{eq:goal_approx}
    \exists \map{D} \in \set{Comb}[(A_{\sigma(1)},B_{\pi(1)}),\dots,(A_{\sigma(n)},B_{\pi(n)})], \|\map{C} - \map{D}\| \leq \varepsilon
\end{align}
where $\|\cdot\|$ is some distance measure to be determined later. To quantize the efficiency of our algorithms, we will focus on the query complexity, namely the number of black-box accesses to the channel $\map{C}$, and the additional running time, including quantum and classical computation time measured by the number of elementary quantum gates and classical operations.





It is convenient to convert quantum data to classical data so that classical data analysis techniques could be adopted. When doing this conversion, we need to ensure that the information encoded in the quantum data are preserved, for which purpose we introduce the {\em informationally complete} POVM \cite{prugovevcki1977information}.
\begin{definition} \label{def:IC}
    A POVM $\{P_\alpha\}$ on Hilbert space $\spc{H}$ is {\em informationally complete} if its elements form a spanning set of linear operators on $\spc{H}$, namely for any $X\in L(\spc{H})$, there exist complex numbers $\{p_\alpha\}$ such that $X=\sum_\alpha p_\alpha P_\alpha$.

    We say a set of states $\{\psi_\alpha\}$ is {\em informationally complete} if there exists an informationally complete POVM $\{P_\alpha\}$ such that ${\psi_\alpha} = P_\alpha/\Tr[P_\alpha], \forall \alpha$.
\end{definition}
For efficiency considerations, we prefer to choose an informationally complete POVM with the minimal number of elements. To form a spanning set of $L(\spc{H})$, the minimal informationally complete POVM contains $\dim(\spc{H})^2$ elements, which are linearly independent. 
A typical example of such POVMs is the symmetric informationally complete POVM (SIC-POVM) \cite{renes2004symmetric}, which has been found for most relevant low-dimensional Hilbert spaces.

The frame operators defined below will be used to characterize the sensitivity of measurement outcomes about the state being measured.
\begin{definition} \label{def:frame}
    The {\em frame operator} of an informationally complete POVM $\{P_\alpha\}$ on Hilbert space $\spc{H}$ is defined as $F :=\sum_\alpha\kketbbra{P_\alpha}\in L(\spc{H}\otimes\spc{H})$, where $\kket{P_\alpha} := (P_\alpha \otimes I) \kket{I} = \sum_{i,j}(P_\alpha)_{ij}\ket{i}\ket{j} \in \spc{H}\otimes\spc{H}$, where $\kket{I}:=\sum_i\ket{i}\ket{i}$ is the unnormalized maximally entangled state.

    For an informationally complete set of states $\{\psi_\alpha\}$, the frame operator is defined as $F':=\sum_\alpha\kketbbra{\psi_\alpha}$ with $\kket{\psi_\alpha}:= ({\psi_\alpha} \otimes I) \kket{I}$. 
\end{definition}

A frame operator with a larger minimum eigenvalue indicates that the outcome probabilities are more sensitive to any change of the state being measured, and the POVM is more efficient for obtaining the full information of the state. We will see the consequences of the minimum eigenvalue of frame operators in the analysis of our algorithms.

In the following sections, we will use the following distance measures for operators $\rho,\sigma \in L(\spc{H})$:
\begin{enumerate}
    \item Trace distance, defined as $\frac12\|\rho - \sigma\|_1$, where $\|X\|_1 := \Tr\left[\sqrt{X^\dag X}\right]$ denotes the Schatten 1-norm, also known as the trace norm and the nuclear norm. For quantum states in any dimension, the trace distance between them is no larger than 1. We will more often use the trace distance without the $1/2$ factor.
    \item Hilbert-Schmidt distance, defined as $\|\rho - \sigma\|_2$, where $\|X\|_2:= \sqrt{\Tr[X^\dag X]}$ denotes the Schatten 2-norm, also known as the Frobenius norm.
\end{enumerate}

The Hilbert-Schmidt distance is dimension-dependent, and is related to the trace distance by the following \cite{coles2019strong}
\begin{align}\label{eq:norm12}
    \|\rho - \sigma\|_1^2 \leq \rank(\rho-\sigma) \|\rho - \sigma\|_2^2 \leq (\rank(\rho)+\rank(\sigma)) \|\rho - \sigma\|_2^2 \,.
\end{align}

To measure the distance between channels, we will use the diamond norm, also known as the completely bounded trace norm. For two channels $\map{C},\map{D}: L(\spc{H}_A) \to L(\spc{H}_B)$, their diamond norm distance is defined as \cite{kitaev2002classical}
\begin{align} \label{eq:diamond_norm}
    \| \map{C} - \map{D} \|_\diamond := \max_{\rho \in S(\spc{H}_A \otimes \spc{H}_A)} \| (\map{C}\otimes \map{I}_A) (\rho) - (\map{D}\otimes \map{I}_A) (\rho) \|_1
\end{align}
where $\map{I}_A:L(\spc{H}_A) \to L(\spc{H}_A)$ is the identity map.

\section{Efficient causal order discovery algorithm for general causal order}

To determine the causal order, one may first make a hypothesis on the causal order and perform tests that accept or reject the hypothesis.
To decide whether $\map{C}$ is a quantum comb of given order, we need to test the equalities in the form of Eq. (\ref{eq:comb}). To conveniently present such equalities, we define the notion of independence for quantum states as the following:
\begin{definition}
    For a state $\rho$, we say two disjoint sets of subsystems $S_{\rm A}$ and $S_{\rm B}$ are {\em independent} if $\rho_{S_{\rm A}\cup S_{\rm B}} = \rho_{S_{\rm A}} \otimes \rho_{S_{\rm B}}$, where $\rho_S$ is the marginal state of $\rho$ on the subsystems in $S$.
\end{definition}

The condition Eq. (\ref{eq:comb}) is then equivalent to that for the Choi state $C$, the sets of systems $\{A_1,\dots,A_k,B_1,\dots,B_k\}$ and $\{A_{k+1},\dots,A_{n}\}$ are independent for every $k$.


To test the conditions in the form of Eq. (\ref{eq:comb}), we use the {\em SWAP test} \cite{buhrman2001quantum}, a quantum circuit that estimates $\Tr[\rho\sigma]$ for two given quantum states $\rho$ and $\sigma$. The circuit is depicted in \autoref{fig:SWAP}.

\newcommand{\cS}{\mathsf{CSWAP}}
\newcommand{\swap}{\mathsf{SWAP}}

\begin{figure}[H]
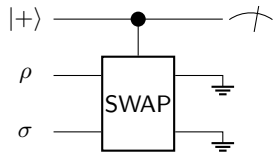

$$\begin{array}{ccc}
    \begin{mathtikz}
        \node[tensor, minimum height=12.5mm] (swap) at (0,-0.375) {$\swap$};
        \node (phi1) at (-1.5,0) {$\rho$};
        \node (phi2) at (-1.5,-0.75) {$\sigma$};
        \node (c) at (-1.5,0.75) {$\ket{+}$};
        \node (right) at (1.125, 0) {};
        \node (left) at (-1.125, 0) {};
        \node (measure) at (1.5,0.75) {\tikz{\measurement}};
        \draw (phi1-|left) -- (phi1-|swap.west);
        \draw (phi2-|left) -- (phi2-|swap.west);
        \coordinate (grd1) at ($(right|-phi1) - (0,0.15)$);
        \coordinate (grd2) at ($(right|-phi2) - (0,0.15)$);
        \begin{scope}[shift={(grd1)}]\drawground\end{scope}
        \begin{scope}[shift={(grd2)}]\drawground\end{scope}
        \draw (swap.east|-phi1) -| (grd1);
        \draw (swap.east|-phi2) -| (grd2);
        \draw (c-|left) -- (c-|right);
        \fill (swap|-c) circle [radius=0.1];
        \draw (swap|-c) -- (swap);
    \end{mathtikz}
\end{array}$$
\caption{\label{fig:SWAP}The SWAP test. The circuit consists of a controlled-SWAP gate with control qubit initialized to $\ket+$. Measuring the control system under the $\{\ket{+},\ket{-}\}$ basis yields the outcome $\ket{+}$ with probability $(1+\Tr[\rho\sigma])/2$. The ground symbol means discarding the system.}
\end{figure}

The algorithm that yields an estimate of $\Tr[\rho\sigma]$ is as follows, which produces an estimate with error no more than $\varepsilon$ with probability $1-\kappa$ as shown in \autoref{lem:SWAP}.

\begin{function}[H]
    \caption{SWAPTEST($\rho$, $\sigma$, $\varepsilon$, $\kappa$)}\label{func:SWAPTEST}
    \Indentp{0.5em}
    \SetInd{0.5em}{1em}
    \SetKwInOut{Preprocessing}{Preprocessing}
    \SetKwInOut{Input}{Input}
    \SetKwInOut{Output}{Output}
    \SetKw{Next}{next}
    \SetKwData{ind}{ind}
    \SetKwData{last}{last}
    \SetKwData{purity}{purity}
    \SetKwData{maxpurity}{maxpurity}

    \ResetInOut{Input}
    \Input{Quantum states $\rho$ and $\sigma$ (accessed by oracles that generate the states), error threshold $\varepsilon$, confidence $\kappa$}
    \ResetInOut{Output}
    \Output{Approximate value of $\Tr[\rho\sigma]$}
    \BlankLine
    $N \gets \lceil 2\varepsilon^{-2}\log(2/\kappa) \rceil$\;
    Run the circuit in \autoref{fig:SWAP} for $N$ times. Let $c_+$ be the number of outcome $\ket +$\;
    Return $2c_+/N - 1$\;
\end{function}

\begin{lem}\label{lem:SWAP}
With probability $1-\kappa$, the SWAP test estimates $\Tr[\rho\sigma]$ within error $\varepsilon$.
\end{lem}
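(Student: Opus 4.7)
The plan is to observe that the SWAP test is essentially a Bernoulli sampling procedure, and then invoke Hoeffding's inequality for the concentration bound. First I would note that a standard calculation for the circuit in \autoref{fig:SWAP} shows that measuring the control qubit in the $\{\ket{+},\ket{-}\}$ basis yields outcome $\ket{+}$ with probability
\begin{align}
p_+ = \frac{1 + \Tr[\rho\sigma]}{2},
\end{align}
so the empirical estimator $\widehat{T} := 2c_+/N - 1$ is unbiased: $\mathbb{E}[\widehat{T}] = 2p_+ - 1 = \Tr[\rho\sigma]$.

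Next I would model $c_+$ as the sum of $N$ i.i.d.\ Bernoulli$(p_+)$ random variables. Applying Hoeffding's inequality to the rescaled estimator $\widehat{T}$, whose summands are bounded in $[-1,1]$, gives
\begin{align}
\Pr\bigl[|\widehat{T} - \Tr[\rho\sigma]| \geq \varepsilon\bigr] \;\leq\; 2\exp\!\left(-\frac{N\varepsilon^2}{2}\right).
\end{align}
Requiring the right-hand side to be at most $\kappa$ yields $N \geq 2\varepsilon^{-2}\log(2/\kappa)$, which is exactly the choice made inside \autoref{func:SWAPTEST}. Hence with probability at least $1-\kappa$, the returned value $2c_+/N - 1$ differs from $\Tr[\rho\sigma]$ by at most $\varepsilon$.

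There is no real obstacle here: the only step that requires a little care is verifying the probability formula for the SWAP test circuit (a short computation using $\swap \,(\rho\otimes\sigma)\,\swap^\dag = \sigma \otimes \rho$ together with $\Tr[\swap\,(\rho\otimes\sigma)] = \Tr[\rho\sigma]$), after which the concentration argument is immediate. One could equivalently use a Chernoff bound for Bernoulli variables, but Hoeffding gives the cleanest constants matching the stated sample size.
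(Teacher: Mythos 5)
Your proof is correct and follows the same route as the paper's: establish that each run yields outcome $\ket{+}$ with probability $(1+\Tr[\rho\sigma])/2$, then apply Hoeffding's inequality to the empirical frequency and plug in $N = \lceil 2\varepsilon^{-2}\log(2/\kappa)\rceil$. The only cosmetic difference is that you rescale to the estimator $2c_+/N-1$ before applying Hoeffding while the paper bounds the deviation of $c_+/N$ by $\varepsilon/2$ first; the resulting bound $2e^{-N\varepsilon^2/2}$ is identical.
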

\begin{proof}
For each run, the probability that the outcome is $\ket +$ is $(1+\Tr[\rho\sigma])/2$.
By Hoeffding's inequality,
\begin{align}
\nonumber    \Pr \left[ \left| \frac{c_+}{N} - \frac{1+\Tr[\rho\sigma]}{2} \right| \leq \varepsilon/2 \right] & \geq 1-2e^{-\varepsilon^2 N /2} \\
    \Pr \left[ \left| 2c_+/N - 1 - \Tr[\rho\sigma] \right| \leq \varepsilon \right] & \geq 1-2e^{-\varepsilon^2 ( 2\varepsilon^{-2}\log(2/\kappa)) /2} = 1-\kappa
\end{align}
\end{proof}

The SWAP test is efficient in the sense that its circuit complexity scales only logarithmically with the dimension $d$ of $\rho$ or $\sigma$. Assuming the states are represented with qubits, the controlled-SWAP gate acting on $\rho$ and $\sigma$ equals to $\lceil \log d \rceil$ controlled-SWAP gates acting on each pair of corresponding qubits of $\rho$ and $\sigma$, and thus can be implemented with $O(\log d)$ gates.

Now, we can use the SWAP test to construct an algorithm that tests the independence between quantum systems. Running SWAP tests on $\rho \otimes \sigma$, $\rho \otimes \rho$, and $\sigma\otimes\sigma$, we are able to obtain estimates for $\Tr[\rho\sigma]$, $\Tr[\rho^2]$ and $\Tr[\sigma^2]$, respectively. From these estimates we can compute the Hilbert-Schmidt distance between the states as $\|\rho - \sigma\|_2^2 = \Tr[\rho^2] + \Tr[\sigma^2] - 2\Tr[\rho\sigma]$. 
Estimating the distance between $\rho$ and $\sigma$ allows us to decide whether these two states are different.


To decide whether two systems are independent, we can test whether different observations of one system would steer the state of the other system to a different marginal state, and the difference between marginal states can be tested with SWAP tests.
Specifically, we are going to perform independence tests between the input and output of a quantum channel, or equivalently, whether the channel is a constant channel. The idea is to input different states ${\psi_1},\dots,{\psi_m}$ to the channel $\map{C}$, and check whether the output states $\map{C}({\psi_1}),\dots,\map{C}({\psi_m})$ are all the same. It turns out that this approach works when the chosen input states $\{\psi_1,\dots,\psi_m\}$ form an informationally complete set, as we will show in Appendix \ref{app:unitary} \autoref{lem:ic_diamond}.


According to Eq. (\ref{eq:norm12}), to convert the Hilbert-Schmidt to trace distance which is more meaningful in state discrimination \cite{costa2016quantum}, we need to ensure that the ranks of states we are testing are not too high.
The low-rank condition can be satisfied if the comb is composed of unitary gates, and the memory size is small. Therefore, we make the following assumption:

\begin{assumption} \label{ass:unitary}
The quantum comb has a memory system of dimension $d_M$ initialized to a pure state, and every input interacts unitarily with the memory system. Formally, the quantum comb $\map{C}\in\set{Comb}[(A_{\sigma'(1)},B_{\pi'(1)}),\dots,(A_{\sigma'(n)},B_{\pi'(n)})]$ has the following form:
$$\begin{mathtikz}
    \def\nx{{"","1","2","n-2","n-1","n"}}
    \node[prepare] (psi) at (-0.5,-0.25) {$\psi_0$};
    \node[tensor2h] (U1) at (1,0) {$U_1$};
    \node[tensor2h] (U2) at (4,0) {$U_2$};
    \node[virtual] (U3) at (6,0) {$\cdots$};
    \node[tensor2h] (U4) at (8.5,0) {$U_{n-1}$};
    \node[tensor2h] (U5) at (12,0) {$U_n$};
    \draw[thick] ($(U1.west)-(0,0.25)$) -- node[below]{$M_0$} (psi);
    \def\lasti{1};
    \foreach \i[remember=\i as \lasti] in {2,3,4,5}
        \draw[thick] ($(U\lasti.east)-(0,0.25)$) -- node[below]{$M_{\pgfmathparse{\nx[\lasti]}\pgfmathresult}$} ($(U\i.west)-(0,0.25)$);
    \foreach \i in {1,2,5}
    {
        \draw[thick] ($(U\i.east)+(0,0.25)$) -- node[above]{$B_{\pi'(\pgfmathparse{\nx[\i]}\pgfmathresult)}$} +(1,0);
        \draw[thick] ($(U\i.west)+(0,0.25)$) -- node[above]{$A_{\sigma'(\pgfmathparse{\nx[\i]}\pgfmathresult)}$} +(-1,0);
    }
    \foreach \i in {4}
    {
        \draw[thick] ($(U\i.east)+(0,0.25)$) -- node[above]{$B_{\pi'(\pgfmathparse{\nx[\i]}\pgfmathresult)}$} +(1.4,0);
        \draw[thick] ($(U\i.west)+(0,0.25)$) -- node[above]{$A_{\sigma'(\pgfmathparse{\nx[\i]}\pgfmathresult)}$} +(-1.4,0);
    }
    \coordinate (grd) at ($(U5.east)-(0,0.25) + (0.25,-0.5)$);
    \draw[thick] ($(U5.east)-(0,0.25)$) -| (grd);
    \begin{scope}[shift={(grd)}]
        \drawground
    \end{scope}
\end{mathtikz}$$
where $\psi_0$ is a pure state, each of $U_1,U_2,\dots,U_n$ is a unitary transformation, each input or output wire has the same dimension $d_A=d_{A_1}=\dots=d_{A_n}=d_{B_1}=\dots=d_{B_n}$, and each memory system $M_i$ has the same dimension $d_M$.
\end{assumption}

\autoref{ass:unitary} does not limit the generality of the quantum process we investigate, since any quantum process with a causal order is equivalent to a unitary circuit in the form of \autoref{ass:unitary} \cite{barrett2019quantum}. The purpose of this assumption is to clarify the size of each unitary gate, which will determine the efficiency of the algorithm we are going to present.

Now we are ready to present our algorithm for combs satisfying \autoref{ass:unitary}. The algorithm runs in a recursive manner: for an $n$-tooth comb, it finds the last tooth of the comb, traces the tooth out (feeds the input with maximally mixed state and discards its output), and reduces the problem to finding the last tooth of an $(n-1)$-tooth comb. \autoref{ass:unitary} ensures that after tracing out the last tooth, the remaining comb still has a low rank and satisfies \autoref{ass:unitary}. We repeat the above procedure until we have only one tooth left, and thus obtain the order of all inputs and outputs.

To find the last tooth, we go back to to Proposition \ref{prop:comb}. If $(A_i,B_j)$ is the input and output of last tooth, denoting $A_{\neq i}$ ($B_{\neq j}$) the set of input (output) wires other than $A_i$ ($B_j$), we have
\begin{align}
\map{C} \in \set{Comb}[(A_{\neq i},B_{\neq j}),(A_i,B_j)] \iff C_{A_1,\dots,A_n,B_{\neq j}} = C_{A_{\neq i},B_{\neq j}} \otimes \frac{I_{A_i}}{d_{A_i}} \,.
\end{align}
The left hand side means that $\map{C}$ is compatible with a causal order with $(A_i,B_j)$ being the last tooth, and the right hand side means that $A_i$ is independent with all the other input and output wires except $B_j$. We enumerate all possible $i$ and $j$, and check whether $(A_i,B_j)$ can be the last tooth with an independence test. If the test is passed, we know that the channel $\map{C}$ can be written in a comb form where $(A_i,B_j)$ is the last tooth, although $\map{C}$ may intrinsically have a different ``last tooth'' since there may be multiple causal orders with which $\map{C}$ is compatible.

The algorithm to find the last tooth is as follows. To test the independence between $A_i$ and $A_{\neq i},B_{\neq j}$, we convert all input wires in $A_{\neq i}$ to output wires by inserting maximally entangled states, as shown in \autoref{fig:rhob}, and test whether the resulting channel is a constant channel.

\begin{function}[H]
    \caption{findlast($\map{O}$, $n$, $\delta$, $\kappa$)}\label{func:findlast}
    \Indentp{0.5em}
    \SetInd{0.5em}{1em}
    \SetKwInOut{Preprocessing}{Preprocessing}
    \SetKwInOut{Input}{Input}
    \SetKwInOut{Output}{Output}
    \SetKw{Next}{next}
    \SetKw{Break}{break}
    \SetKwData{ind}{ind}
    \SetKwData{last}{last}
    \SetKwData{purity}{purity}
    \SetKwData{maxpurity}{maxpurity}
    \SetKwFunction{SWAPTEST}{SWAPTEST}

    \ResetInOut{Input}
    \Input{A quantum channel oracle $\map{O}$ with input wires $A_1,\dots,A_n$ and output wires $B_1,\dots,B_n$, error threshold $\delta$, confidence $\kappa$}
    \ResetInOut{Output}
    \Output{The labels of the input and output of the last tooth of $\map{O}$}
    \BlankLine

    $\maxpurity \gets -\infty$\;
    $\varepsilon \gets \delta/4$\;
    \For{$i \gets 1$ \KwTo $n$}
    {
        \For{$j \gets 1$ \KwTo $n$}
        {\label{line:forj}
            Pick an informationally complete set of states for $A_i$, and call them $\{\psi_b\}_{b=1}^{d_{A_i}^2}$ \;
            For each $b$, prepare the circuit that generate the state $\rho_b := \Tr_{B_j} \left[ \left( \map{O}\otimes\map{I_{A'_{\neq i}}} \right) \left( \bigotimes_{k \neq i} \frac{\kketbbra{I}_{A_kA'_k}}{d_{A_k}} \otimes (\psi_b)_{A_i} \right) \right]$ as shown in \autoref{fig:rhob}\;
            $\ind \gets $ \True \;
            $p_{1} \gets $\SWAPTEST{$\rho_1$, $\rho_1$, $\varepsilon$, $\kappa$} \;
            \For{$k\gets 2$ \KwTo $d_{A_i}^2$}
            {
                $p_{k} \gets $\SWAPTEST{$\rho_k$, $\rho_k$, $\varepsilon$, $\kappa$} \;
                $p_{1k} \gets $\SWAPTEST{$\rho_1$, $\rho_k$, $\varepsilon$, $\kappa$} \;
                \If{$p_{1} + p_{k} - 2 p_{1k} > \delta$}
                {
                \tcp{The estimated Hilbert-Schmidt distance between $\rho_1$ and $\rho_k$ is larger than $\delta$}
                    $\ind \gets$ \False \;
                    \Break\tcp*{Exit the loop for $k$}
                }
            }
            \If{$\ind$}
            {
                Output $(i,j)$ and exit this function\;
            }
        }
    }
    Output ``$\map{O}$ is not a quantum comb.'' and halt\;
\end{function}

\begin{figure}
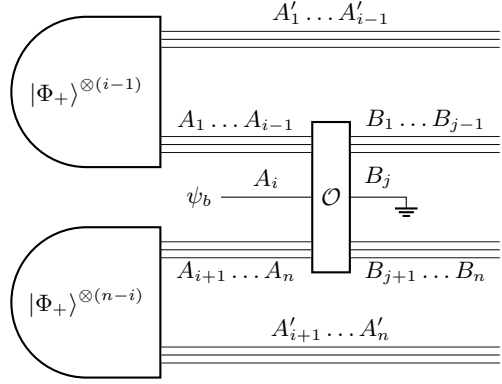

$$\begin{mathtikz}
    \node[tensor, minimum height=2cm] (O) at (0,0) {$\map{O}$};
    \node (psi) at (-1.75,0) {$\psi_b$};
    \draw (psi) -- node[above] {$A_i$} (O);
    \draw[triple] ($(O.west)+(0,0.7)$) -- node[above,yshift=0.5mm] {$A_1\dots A_{i-1}$} +(-2,0);
    \draw[triple] ($(O.west)-(0,0.7)$) -- node[below,yshift=-0.5mm] {$A_{i+1}\dots A_n$} +(-2,0);
    \node[prepare, minimum height=2cm] (E1) at (-3.25,1.4) {$\ket{\Phi_+}^{\otimes(i-1)}$};
    \node[prepare, minimum height=2cm] (E2) at (-3.25,-1.4) {$\ket{\Phi_+}^{\otimes(n-i)}$};
    \coordinate (right) at (2.25,0);
    \draw[triple] ($(O.east)+(0,0.7)$) -- node[above,yshift=0.5mm] {$B_1\dots B_{j-1}$} ($(O.east -| right)+(0,0.7)$);
    \draw[triple] ($(O.east)-(0,0.7)$) -- node[below,yshift=-0.5mm] {$B_{j+1}\dots B_n$} ($(O.east -| right)-(0,0.7)$);
    \draw[triple] ($(E1.east)+(0,0.7)$) -- node[above,yshift=0.5mm] {$A'_1\dots A'_{i-1}$} ($(E1.east -| right)+(0,0.7)$);
    \draw[triple] ($(E2.east)-(0,0.7)$) -- node[above,yshift=0.5mm] {$A'_{i+1}\dots A'_n$} ($(E2.east -| right)-(0,0.7)$);
    \draw[white] ($(right)-(0,2.25)$) -- ($(right)+(0,2.25)$);  
    \coordinate (grd) at (1,-0.15);
    \draw (O) -- node[above] {$B_j$} (O-|grd) -- (grd);
    \begin{scope}[shift={(grd)}]
        \drawground
    \end{scope}
\end{mathtikz}$$
\caption{{\bf Circuit for $\rho_b$.} $\ket{\Phi_+}:= \kket{I}/\sqrt{d_A}$ denotes the maximally entangled state.\label{fig:rhob}}
\end{figure}

Given the subroutine to find the last tooth, the main recursive algorithm is as follows.

\begin{algorithm}[H]
    \Indentp{0.5em}
    \SetInd{0.5em}{1em}
    \SetKwInOut{Preprocessing}{Preprocessing}
    \SetKwInOut{Input}{Input}
    \SetKwInOut{Output}{Output}
    \SetKwFor{Whenever}{whenever}{do}{end}
    \SetKw{KwDownto}{downto}
    \SetKwData{ind}{ind}
    \SetKwFunction{findlast}{findlast}

    \ResetInOut{Input}
    \Input{Black-box access of a quantum comb $\map{C}$ with input wires $A_1,\dots,A_n$ and output wires $B_1,\dots,B_n$, error threshold $\delta$, confidence $\kappa$}
    \ResetInOut{Output}
    \Output{Ordering of inputs and outputs} 
    \BlankLine

    Initialize $S$ to an empty list\;
    Define $\map{O} := \map{C}$ \;
    \For{$k \gets n$ \KwDownto $1$}
    {
        $(A_x,B_y) \gets$ \findlast{$\map{O}$, $k$, $\delta$, $\kappa$}\;
        Insert $(A_x,B_y)$ before the beginning of $S$\;
        Update the definition of $\map{O}$ to $\rho \mapsto \Tr_{B_y}\left[\map{O}\left(\rho \otimes \frac{I_{A_x}}{d_{A_x}}\right)\right]$ \tcp*{Trace out the last tooth}
    }
    Output $S$\;
    \caption{Causal order discovery algorithm based on \autoref{ass:unitary}}\label{alg:unitary}
\end{algorithm}

The efficiency and correctness of \autoref{alg:unitary} is given in the following theorem, whose proof is in Appendix \ref{app:unitary}:

\begin{theo} \label{thm:unitary}
Under \autoref{ass:unitary}, with probability $1-\kappa_0$, \autoref{alg:unitary} outputs a causal order $(A_{\sigma(1)},B_{\pi(1)}),\dots,(A_{\sigma(n)},B_{\pi(n)})$ such that
    \begin{align}\exists \map{D} \in \set{Comb}[(A_{\sigma(1)},B_{\pi(1)}),\dots,(A_{\sigma(n)},B_{\pi(n)})], ~ \|C - D\|_1 \leq \varepsilon_0 \label{eq:thm_unitary}\end{align}
with number of queries to $\map{C}$ in the order of
    \begin{align}
        T_{\rm query}=O\left(n^{11} d_A^{12} d_M^2 \varepsilon_0^{-8} \lambda_{\min}^{-2} \log (n d_A \kappa_0^{-1})\right)
    \end{align}
and extra processing time in the order of
    \begin{align}O(T_{\rm query}n\log d_A)\end{align}
where $\lambda_{\min}$ is the minimum eigenvalue of the frame operator of informationally complete set of states $\{\psi_k\}$ for a $d_A$-dimensional system. If we choose $\{\psi_k\}$ proportional to a SIC-POVM, we have $\lambda_{\min}^{-1} = (d_A+1)/d_A=O(1)$  \cite{renes2004symmetric}.
\end{theo}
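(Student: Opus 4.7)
The plan is to analyze \autoref{alg:unitary} by first establishing the soundness and completeness of the \texttt{findlast} subroutine and then tracking how its error guarantee compounds through the $n$ levels of recursion.

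First I would study \texttt{findlast}. For each candidate pair $(A_i, B_j)$, the construction of $\rho_b$ in \autoref{fig:rhob} makes $\rho_b$ into the Choi state of the channel one gets by fixing the $A_i$-input to $\psi_b$, feeding maximally entangled states to the other inputs, and tracing out $B_j$. The condition from \autoref{prop:comb} for $(A_i, B_j)$ to be a valid last tooth is precisely that this channel does not depend on $\psi_b$, i.e.\ that all $\rho_b$ coincide, giving immediate completeness. For soundness I would note that each Hilbert-Schmidt distance $\|\rho_1 - \rho_k\|_2^2 = \Tr[\rho_1^2] + \Tr[\rho_k^2] - 2\Tr[\rho_1\rho_k]$ is estimated by three SWAP tests to within additive error $\delta$ (by combining three applications of \autoref{lem:SWAP} with precision $\varepsilon=\delta/4$). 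A union bound over the $O(n^2 d_A^2)$ SWAP tests in one call of \texttt{findlast} then gives confidence $1-O(n^2 d_A^2)\kappa$ that every accepted pair has true $\|\rho_1-\rho_k\|_2 \leq O(\sqrt{\delta})$ for all $k$.

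Second I would convert this Hilbert-Schmidt guarantee into a channel-level guarantee. Under \autoref{ass:unitary} every reduced comb is the Stinespring dilation of a channel whose memory has dimension $d_M$, so $\rank(\rho_b - \rho_1) = O(d_M^2)$, and Eq.~(\ref{eq:norm12}) yields $\|\rho_1 - \rho_k\|_1 \leq O(d_M\sqrt{\delta})$. I would then invoke the informationally complete lemma cited in the text (\autoref{lem:ic_diamond} of the appendix): because $\{\psi_b\}$ is informationally complete with frame operator eigenvalues bounded below by $\lambda_{\min}$, small variation of the $\rho_b$ in $b$ implies that the map from $A_i$ to the remaining wires is close in diamond norm to a constant channel, paying an overhead of $\lambda_{\min}^{-1}$ and a factor polynomial in $d_A$. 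Together these steps certify that $\map{O}$ is within some per-step error $\varepsilon_1$, polynomial in $\sqrt{\delta}, d_A, d_M, \lambda_{\min}^{-1}$, of a genuine comb with $(A_x, B_y)$ as its last tooth.

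Third I would close the recursion. Tracing out the accepted last tooth in \autoref{alg:unitary} preserves \autoref{ass:unitary} with the same $d_A$ and $d_M$, and the diamond norm is contractive under this operation, so the cumulative diamond error after all $n$ iterations is at most $n\varepsilon_1$. Choosing $\delta$ so that this cumulative error is at most $\varepsilon_0$ (via $\|C - D\|_1 \leq \|\map{C} - \map{D}\|_\diamond$) and $\kappa$ so that a union bound over the $n$ outer iterations and all internal SWAP tests yields failure probability $\kappa_0$ determines the required precision. Multiplying the per-SWAP query cost $O(\delta^{-2} \log \kappa^{-1})$ by the $O(n^3 d_A^2)$ SWAP tests performed across the whole algorithm gives the stated query complexity; the processing-time factor $O(n\log d_A)$ comes from implementing each controlled-SWAP on a state of size $O(n d_A^2)$ qubits under \autoref{ass:unitary}.

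The hardest part will be the Hilbert-Schmidt-to-diamond-norm conversion: making the constants in the frame-operator argument explicit and polynomial, confirming that the Choi rank is genuinely controlled by $d_M$ after every trace-out rather than growing with the depth of the recursion, and ensuring that the per-step error accumulates additively rather than multiplicatively across the $n$ iterations. These are exactly the delicate points that determine whether the final complexity is polynomial in $n$ or blows up, and they are where I expect essentially all the technical work to concentrate.
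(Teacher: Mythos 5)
Your overall strategy matches the paper's: establish completeness and soundness of \texttt{findlast} via SWAP tests and a union bound, convert the Hilbert--Schmidt guarantee to a trace/diamond-norm guarantee using the rank bound from \autoref{ass:unitary} together with the informationally complete frame argument (\autoref{lem:ic_diamond}, packaged in the paper as \autoref{lem:2norm}), and then combine the per-level guarantees across the recursion. The gap is in the combination step. What each call to \texttt{findlast} certifies is a statement about the channel \emph{with $B_y$ traced out}: by \autoref{prop:comb}, the last-tooth condition is a condition on the marginal $C_{A_1\dots A_n,B_{\neq j}}$, and \autoref{lem:2norm} accordingly only gives $\|\Tr_{B_j}[O]-D\|_1\le\varepsilon_1$ for some $\map D\in\set{Comb}[(A_{\neq i},B_{\neq j}),(A_i,\emptyset)]$. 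This is \emph{not} the statement that $\map O$ itself is $\varepsilon_1$-close to a comb with $(A_x,B_y)$ as its last tooth, and the per-level combs $D^{(i)}$ obtained at different recursion depths are not a priori consistent with one another. Your claim that the errors then ``accumulate additively'' to $n\varepsilon_1$ in diamond norm therefore does not follow: to lift a trace-norm bound on a marginal to a bound on an extension of the full state one must pass through fidelity and Uhlmann's theorem, which costs a square root. This is exactly what the paper's \autoref{lem:p-tr} and \autoref{lem:2ton} do --- they recursively purify each level, splice the isometric extensions together, and obtain a global comb $D$ with $\|C-D\|_1\le(4n-6)\sqrt{\delta_{\rm step}}$ rather than $n\,\delta_{\rm step}$.

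This is not a cosmetic difference: it is what produces the stated complexity. The two nested square roots (2-norm $\to$ 1-norm, then marginal $\to$ purification) force $\delta=\Theta\bigl(\lambda_{\min}\varepsilon_0^4/(n^4 d_M d_A^5)\bigr)$, whence $N=O(\delta^{-2}\log\kappa^{-1})$ per SWAP test and, after multiplying by the $O(n^3d_A^2)$ tests, the claimed $T_{\rm query}=O\bigl(n^{11}d_A^{12}d_M^2\varepsilon_0^{-8}\lambda_{\min}^{-2}\log(nd_A\kappa_0^{-1})\bigr)$. Your additive accounting would give $\delta\propto\varepsilon_0^2/n^2$ and hence $\varepsilon_0^{-4}$ and a lower power of $n$ --- a \emph{stronger} bound than the theorem, which is a sign the stitching step has been assumed rather than proved. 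You correctly flagged additivity as the delicate point, but the resolution is that it fails in the naive form; the missing ingredient is the purification-based induction of \autoref{lem:2ton}. (Two minor further points: the rank of $\rho_b$ is controlled as $\rank(\rho_b)\le d_{B_j}\rank(\map C)\le d_Ad_M$ via \autoref{lem:rank2}, not $O(d_M^2)$; and the controlled-SWAP acts on $O(n\log d_A)$ qubits, not $O(nd_A^2)$.)
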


\autoref{alg:unitary} is the first efficient algorithm for quantum causal order discovery. Compared with the algorithm in Ref. \cite{giarmatzi2018quantum}, we do not require a classical description of the process and access the quantum process in a black-box manner. Additionally, the efficiency of our algorithm is guaranteed by \autoref{thm:unitary}.

In Eq. (\ref{eq:thm_unitary}), we have used the trace norm between Choi state as a measure of distance between channels. This measures the optimal probability one can distinguish the Choi state $C$ and $D$. For the distance between channels, we often use the diamond norm distance, which measures the distinguishability between channels $\map{C}$ and $\map{D}$, where all possible entangled states on all input systems and an additional ancillary system are allowed for probing the difference between the channels. Generally $\|C - D\|_1 \leq \|\map{C} - \map{D}\|_\diamond$, meaning that a small trace distance between Choi states does not guarantee the indistinguishability of the channels. However, the trace distance is still meaningful if we limit the observer to access one pair of input and output at a time. This is because for the marginal channels on input $A_i$ and output $B_j$, denoted as $\map{C}_{A_i\to B_j}$ and  $\map{D}_{A_i\to B_j}$, one has
\begin{align}
    \| \map{C}_{A_i\to B_j} - \map{D}_{A_i\to B_j} \|_\diamond \leq d_{A_i} \|C_{A_iB_j} - D_{A_iB_j} \|_1 \leq  d_{A_i} \|C - D\|_1 \leq d_{A_i}\varepsilon_0
\end{align}
which is a small value since $d_{A_i}$ is constant. This shows that, an observer that has access only to the wires $A_i$ and $B_j$ is unable to distinguish the channels $\map{C}$ and $\map{D}$ with a high probability.



\section{Causal order discovery algorithms based on local observations}

\autoref{alg:unitary} is already efficient in the sense that its complexity scales polynomially with the number of systems. Under some assumptions on the causal order, we are able to give algorithms that have much lower query complexity and are more experimentally friendly. They require only local state preparations and local measurements.

\subsection{Independence test based on classical statistics}\label{ss:indep}

Recall that the condition Eq. (\ref{eq:comb}) for quantum combs is essentially testing the independence between systems of the Choi state. Now we consider the cases where the causal order can be inferred from local observations on the Choi state, each of which involves a pair of input and output systems.
Using informationally complete POVMs, we are able to convert quantum states to classical probability distributions without losing information. Therefore, we can perform the independence test between two quantum systems $A$ and $B$ by observing the probability distribution obtained by measuring the systems with informationally complete POVMs. Indeed, the quantum systems are independent if and only if the measurement outcomes on the systems are independent. 


To quantitatively measure the amount of correlation between two systems, we define an independence measure based on  trace distance.
\begin{definition}
For a bipartite state $\rho_{AB} \in S(\spc{H}_A \otimes \spc{H}_B)$, define
\begin{align}
    \chi_1(\rho_{AB}) := \| \rho_{AB} - \rho_A \otimes \rho_B \|_1 \,,
\end{align}
where $\rho_A:=\Tr_B[\rho_{AB}]$ and $\rho_B:=\Tr_A[\rho_{AB}]$ are the marginal states of $\rho_{AB}$.
\end{definition}
Clearly, $\chi_1(\rho_{AB}) = 0$ if and only if $\rho_{AB}$ is a product state. Our first independence test algorithm will be based on the estimation of $\chi_1(\rho_{AB})$. The idea is essentially quantum state tomography: measure $\rho_{AB}$ with informationally complete POVMs $\{P_\alpha\}$ on system $A$ and $\{Q_\beta\}$ on system $B$, use the outcome statistics to reconstruct $\rho_{AB}$, and estimate $\chi_1(\rho_{AB})$ based on the reconstructed state. The algorithm is given in \autoref{alg:c1}.

\begin{function}[H]
    \caption{estimatechi1(\pbox{5cm}{$\{(a^{(k)},b^{(k)})\}_{k=1}^{N},\{P_\alpha\},\{Q_\beta\}$})\label{alg:c1}}
    \Indentp{0.5em}
    \SetInd{0.5em}{1em}
    \SetKwInOut{Input}{Input}
    \SetKwInOut{Output}{Output}
    \SetKw{Break}{break}
    \SetKwData{matched}{matched}

    \ResetInOut{Input}
    \Input{$N$ measurement outcomes $\{(a^{(k)},b^{(k)})\}_{k=1}^{N}$ of POVM $\{P_\alpha \otimes Q_\beta\}$ on state $\rho_{AB}$}
    \ResetInOut{Output}
    \Output{An estimate of $\chi_1(\rho_{AB})$} 
    \BlankLine

    Initialize $\hat{p}_{\alpha\beta}$ to zero for all $\alpha=1,\dots,d_A^2,$ and $\beta=1,\dots,d_B^2$\;
    \For{$k \gets 1$ \KwTo $N$}
    {
            $\hat{p}_{a^{(k)}b^{(k)}} \gets \hat{p}_{a^{(k)}b^{(k)}} + 1/ N$\;
    }

    $F \gets \sum_\alpha \kketbbra{P_\alpha}$\;
    $G \gets \sum_\beta \kketbbra{Q_\beta}$\;
    $S \gets \sum_\alpha \ket{\alpha}\bbra{P_\alpha}$, where $\{\ket\alpha\}$ is a set of orthonormal vectors indexed by $\alpha$\;
    $R \gets \sum_\beta \ket{\beta}\bbra{Q_\beta}$, where $\{\ket\beta\}$ is a set of orthonormal vectors indexed by $\beta$\;
    $\ket{\hat p} \gets \sum_{\alpha,\beta} \hat p_{\alpha\beta}\ket\alpha \ket\beta$\;
    Compute operator $\hat\rho_{AB}$ by $\kket{\hat{\rho}_{AB}} = ( F^{-1} S^\dag \otimes G^{-1} R^\dag)\ket{\hat p}$ \;

    Output $\chi_1(\hat\rho_{AB})$\;
\end{function}

Now we analyze the accuracy and complexity of \autoref{alg:c1}.
Define the constant $\xi:=
\frac{ \sqrt{\lambda_{\min}(F)\lambda_{\min}(G)}}
{ \sqrt{d_A^2 d_B^2 + 4d_B^2 + 4d_A^2 } d_Ad_B}$
where $\lambda_{\min}(X)$ denotes the minimum eigenvalue of operator $X$
and $F$ and $G$ are defined in \autoref{alg:c1}. Then we define
\begin{align}\label{eq:kappaepsilon}
    \kappa(\varepsilon)
    := 2 (d_A^2d_B^2+d_A^2+d_B^2) e^{-2 \xi^2 \varepsilon^2 N} .
\end{align}
The error of the estimate $\chi_1(\hat\rho_{AB})$ is given in the following lemma, whose proof is in Appendix \ref{app:alg_ind}.
\begin{lem}\label{lem:alg_ind}
    For any positive real number $\varepsilon>0$, over the $N$ measurement outcomes on independent copies of $\rho_{AB}$, with probability $1-\kappa_0$, the output of \autoref{alg:c1} estimates $\chi_1(\rho_{AB})$ with the following error bound
    \begin{align}\label{eq:c1_bound}
        |\chi_1(\rho_{AB}) - \chi_1(\hat\rho_{AB})| \leq \varepsilon \,.
    \end{align}
    That is, the inequality
    \begin{align}\label{eq:c1_bound-ie}
    \rm{Pr} (   |\chi_1(\rho_{AB}) - \chi_1(\hat\rho_{AB})| > \varepsilon )
    \le \kappa(\varepsilon)
    \end{align}
    holds for any positive real number $\varepsilon>0$.
    Here, $\rm{Pr} ( C)$ expresses the probability that the condition $C$ is satisfied.
\end{lem}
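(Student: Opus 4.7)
The plan is to show that the reconstructed operators $\hat\rho_{AB}$, $\hat\rho_A:=\Tr_B[\hat\rho_{AB}]$, and $\hat\rho_B:=\Tr_A[\hat\rho_{AB}]$ are simultaneously close (in trace distance) to $\rho_{AB}$, $\rho_A$, $\rho_B$ on an event of sufficiently high probability, and then to convert this into a bound on $|\chi_1(\hat\rho_{AB})-\chi_1(\rho_{AB})|$ via a Lipschitz-type estimate for $\chi_1$ in its three arguments.

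The first ingredient is that the reconstruction formula is an exact left-inverse of the measurement map: since $S^\dagger S = \sum_\alpha \kketbbra{P_\alpha}=F$, applying $F^{-1}S^\dagger$ to the true Born probabilities $p_\alpha=\Tr[P_\alpha\rho_A]$ returns $\kket{\rho_A}$ (and similarly on the $B$ side). Hence the reconstruction error is
\begin{align*}
\kket{\hat\rho_{AB}}-\kket{\rho_{AB}} = (F^{-1}S^\dagger \otimes G^{-1}R^\dagger)(\ket{\hat p}-\ket p).
\end{align*}
Because $(F^{-1}S^\dagger)(F^{-1}S^\dagger)^\dagger=F^{-1}$, the operator norm of $F^{-1}S^\dagger$ equals $1/\sqrt{\lambda_{\min}(F)}$, and likewise for $G^{-1}R^\dagger$. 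This gives
\begin{align*}
\|\hat\rho_{AB}-\rho_{AB}\|_2 \;\le\; \frac{\|\ket{\hat p}-\ket p\|_2}{\sqrt{\lambda_{\min}(F)\lambda_{\min}(G)}},
\end{align*}
and analogous bounds for the reconstructed marginals in terms of $\hat p^{(A)}_\alpha:=\sum_\beta \hat p_{\alpha\beta}$ and $\hat p^{(B)}_\beta:=\sum_\alpha \hat p_{\alpha\beta}$.

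The second ingredient is concentration: each of the $d_A^2 d_B^2$ joint outcomes and $d_A^2+d_B^2$ marginal outcomes is an empirical mean of $N$ Bernoulli trials, so Hoeffding's inequality bounds the failure probability of $|\hat p_{\cdot}-p_{\cdot}|\le\varepsilon'$ by $2e^{-2(\varepsilon')^2N}$. A union bound over all $d_A^2 d_B^2+d_A^2+d_B^2$ such deviations yields failure probability at most $2(d_A^2d_B^2+d_A^2+d_B^2)e^{-2(\varepsilon')^2N}$, matching the prefactor of $\kappa(\varepsilon)$. On the good event, $\|\ket{\hat p}-\ket p\|_2\le d_Ad_B\varepsilon'$, $\|\hat p^{(A)}-p^{(A)}\|_2\le d_A\varepsilon'$, and $\|\hat p^{(B)}-p^{(B)}\|_2\le d_B\varepsilon'$.

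Third, I convert to trace distance using $\|X\|_1\le\sqrt{\rank X}\,\|X\|_2$ (with ranks bounded by $d_Ad_B$, $d_A$, $d_B$ respectively) and apply the Lipschitz bound $|\chi_1(\hat\rho_{AB})-\chi_1(\rho_{AB})| \le \|\hat\rho_{AB}-\rho_{AB}\|_1 + \|\hat\rho_A\otimes\hat\rho_B-\rho_A\otimes\rho_B\|_1$, expanding the product term via $\hat\rho_A\otimes\hat\rho_B-\rho_A\otimes\rho_B = \hat\rho_A\otimes(\hat\rho_B-\rho_B)+(\hat\rho_A-\rho_A)\otimes\rho_B$. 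The normalization $\Tr[\hat\rho_A]=1$ (inherited from $\sum_\alpha \hat p^{(A)}_\alpha=1$ and $\sum_\alpha P_\alpha=I_A$) together with the triangle inequality controls $\|\hat\rho_A\|_1$, so the three resulting contributions are proportional to $\varepsilon'$ with weights arranged into $d_Ad_B$, $2d_A$, $2d_B$. Cauchy--Schwarz assembles these into $\sqrt{d_A^2d_B^2+4d_A^2+4d_B^2}$, and pulling the common factor $d_Ad_B/\sqrt{\lambda_{\min}(F)\lambda_{\min}(G)}$ out reproduces $1/\xi$. Setting $\varepsilon'=\xi\varepsilon$ then gives $|\chi_1(\hat\rho_{AB})-\chi_1(\rho_{AB})|\le\varepsilon$ on the good event, completing the proof.

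The hard part is the bookkeeping in the final step: tracking which dimensional prefactors come from the operator-norm bound of the reconstruction, which from the rank-based $\|\cdot\|_1\le\sqrt{\rank}\,\|\cdot\|_2$ conversion, and which from the $\ell_2$ norm of the empirical deviation, and then aggregating them via Cauchy--Schwarz so that the quadratic form telescopes into the exact expression $d_A^2d_B^2+4d_A^2+4d_B^2$ appearing inside $\xi$. A secondary subtlety is that the reconstructed $\hat\rho_A$, $\hat\rho_B$ need not be positive semidefinite; one must rely only on their trace-1 normalization and Hermiticity when invoking submultiplicativity of the trace norm on the product state.
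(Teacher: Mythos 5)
Your concentration step (Hoeffding plus a union bound over the $d_A^2d_B^2+d_A^2+d_B^2$ empirical frequencies) and your frame-operator bound $\|F^{-1}S^\dagger\|_\infty=\lambda_{\min}(F)^{-1/2}$ match the paper's ingredients (the latter is Lemma~\ref{lem:2normbounds} in disguise). But your final step diverges from the paper's, and it is where the argument breaks. You decompose at the operator level, writing $\hat\rho_A\otimes\hat\rho_B-\rho_A\otimes\rho_B=\hat\rho_A\otimes(\hat\rho_B-\rho_B)+(\hat\rho_A-\rho_A)\otimes\rho_B$ and invoking $\|\hat\rho_A\|_1$. As you yourself note, $\hat\rho_A$ need not be positive semidefinite, and a Hermitian trace-one operator can have trace norm much larger than one; trace normalization does \emph{not} control $\|\hat\rho_A\|_1$. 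The best you can do is $\|\hat\rho_A\|_1\le 1+\|\hat\rho_A-\rho_A\|_1$, which injects a term quadratic in $\varepsilon'$ and spoils the exact bound $|\chi_1(\rho_{AB})-\chi_1(\hat\rho_{AB})|\le\varepsilon$ that the lemma claims for \emph{every} $\varepsilon>0$ with the specific constant $\xi$. Moreover, your claim that the three contributions carry weights $d_Ad_B$, $2d_A$, $2d_B$ and that Cauchy--Schwarz "telescopes" them into $\sqrt{d_A^2d_B^2+4d_A^2+4d_B^2}$ does not follow from your setup: the marginal terms come with $\lambda_{\min}(F)^{-1/2}$ or $\lambda_{\min}(G)^{-1/2}$ alone (not the product) and with rank factors $\sqrt{d_A}$, $\sqrt{d_B}$, so the "common factor" $d_Ad_B/\sqrt{\lambda_{\min}(F)\lambda_{\min}(G)}$ cannot be pulled out, and there is no source for the factors of $4$ in your decomposition.

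The paper avoids both problems by doing the product decomposition on the \emph{classical} probability vectors rather than on the reconstructed operators. It sets $t_{\alpha\beta}:=p_{\alpha\beta}-p^A_\alpha p^B_\beta$ and $\hat t_{\alpha\beta}:=\hat p_{\alpha\beta}-\hat p^A_\alpha\hat p^B_\beta$, and bounds $|t_{\alpha\beta}-\hat t_{\alpha\beta}|\le(1+p^A_\alpha+\hat p^B_\beta)\varepsilon_1$; here $p^A_\alpha$ and $\hat p^B_\beta$ are genuine probabilities and empirical frequencies, hence automatically in $[0,1]$ and summing to one, so no analogue of the $\|\hat\rho_A\|_1$ issue arises and the bound stays linear in $\varepsilon_1$. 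The elementary inequality $(1+p+q)^2\le 1+4p+4q$ then yields $\sum_{\alpha\beta}(t_{\alpha\beta}-\hat t_{\alpha\beta})^2\le(d_A^2d_B^2+4d_A^2+4d_B^2)\varepsilon_1^2$ --- this is where the $4$'s come from --- and a \emph{single} application of the frame-operator bound to the single operator $\tau-\hat\tau$ (noting $t_{\alpha\beta}=\Tr[(P_\alpha\otimes Q_\beta)\tau]$), followed by one rank-based conversion to the trace norm and the reverse triangle inequality $|\,\|\tau\|_1-\|\hat\tau\|_1|\le\|\tau-\hat\tau\|_1$, gives exactly $\varepsilon$. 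If you want to salvage your route, you would at minimum have to accept a larger constant in $\xi$ (hence a weaker $\kappa$) and handle the quadratic correction; to get the lemma as stated, move the bilinear decomposition to the probability vectors as the paper does.
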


In other words, to reach confidence $1-\kappa_0$ and error $\varepsilon_0$, the number of samples $N$ is of order
\begin{align} \label{eq:alg_ind_N}
    N = O\left(d_A^4 d_B^4 \lambda_{\min}^{-1}(F) \lambda_{\min}^{-1}(G) \varepsilon_0^{-2} \log(d_A d_B \kappa_0^{-1}) \right) \,.
\end{align}
If one choose $\{P_\alpha\}$ and $\{Q_\beta\}$ to be SIC-POVMs, $\lambda_{\min}^{-1}(F) = d_A(d_A+1) = O(d_A^2)$ and $\lambda_{\min}^{-1}(G) = d_B(d_B+1) = O(d_B^2)$ \cite{renes2004symmetric}. In this case, the number of samples $N$ is polynomial in $d_A$ and $d_B$.

\autoref{alg:c1} could be used on quantum channels to discover the correlation between input and output wires. To characterize a quantum channel, we could transform it to its Choi state by feeding the input with half of a maximally entangled state, and apply \autoref{alg:c1} on the Choi state. To save the cost of creating and maintaining entangled states, we could use \autoref{alg:c1} in an alternative way. Let $\{\psi_\alpha\}_{\alpha=1}^{d_A^2}$ be an informationally complete set of states such that $\psi_\alpha = P^T_\alpha/\Tr[P_\alpha]$. One samples $\psi_\alpha$ from $\{\psi_\alpha\}_{\alpha=1}^{d_A^2}$ with probability $\Tr[P_\alpha]/d_A$, apply quantum channel $\map{C}: L(\spc{H}_A)\to L(\spc{H}_B)$ on $\psi_\alpha$, and measure the output with $\{Q_\beta\}_{\beta=1}^{d_B^2}$. This process generates the same statistics for $(\alpha,\beta)$ according to the following equation:
\begin{align}\label{eq:channel_statistics}
    \frac{1}{d_A}\Tr[P_\alpha]\Tr[Q_\beta \map{C}(\psi_\alpha)] = \frac{1}{d_A}\Tr[Q_\beta \map{C}(P^T_\alpha)] = \Tr\left[\frac{1}{d_A}\sum_{i,j} P_\alpha\ket{i}\bra{j} \otimes Q_\beta \map{C}(\ket{i}\bra{j})\right] = \Tr[(P_\alpha \otimes Q_\beta)C]
\end{align}
where $C := \frac{1}{d_A} \sum_{i,j} \ket{i}\bra{j} \otimes \map{C}(\ket{i}\bra{j})$ is the Choi state of channel $\map{C}$. The left hand side of Eq. (\ref{eq:channel_statistics}) is the probability of sampling $\psi_\alpha$ times the probability of outcome $\beta$ on the output conditioned on the input being $\psi_\alpha$, and the right hand side of Eq. (\ref{eq:channel_statistics}) is the probability of outcome $(\alpha,\beta)$ if one measure the Choi state $C$ with POVM $\{P_\alpha \otimes Q_\beta\}$.

\subsection{Inferring causal order form independence of input-output pairs}

Now we adopt \autoref{alg:c1} to causal order discovery. By testing the correlation between every pair of input and output wires, one is able to infer the causal orders: if output $B_j$ is correlated to input $A_i$, $B_j$ must be after $A_i$. In this section, we study cases where such reasoning yields the full causal order.

We first introduce an algorithm that discovers the correlations between every pair of input and output wires. The idea is to first collect all the required classical data, and deduce all the correlations according to the statistics. During this procedure, the data are reused for different input-output pairs, leading to fewer queries to the channel $\map{C}$.



\begin{function}[H]
    \caption{independence($\map{C}, N, \chi_-$)\label{alg:order_linear2}\label{alg:ind}}
    \Indentp{0.5em}
    \SetInd{0.5em}{1em}
    \SetKwInOut{Input}{Input}
    \SetKwInOut{Output}{Output}
    \SetKwFunction{estimatechi}{estimatechi1}

    \ResetInOut{Input}
    \Input{Black-box access of a quantum comb $\map{C}$ with input wires $A_1,\dots,A_n$ and output wires $B_1,\dots,B_n$}
    \ResetInOut{Output}
    \Output{An array $\ind$ where $\ind_{i,j}$ being \True{} indicates $A_i$ and $B_j$ are independent}
    \BlankLine

    Pick informationally complete linearly independent POVMs $\{P_{\alpha}^{i}\}_{\alpha=1}^{d_{A_i}^2}$ for each input $A_i$ and $\{Q_{\beta}^{j}\}_{\beta=1}^{d_{B_j}^2}$ for each output $B_j$\; 
    Define ${\psi_\alpha^{i}}:= (P_{\alpha}^{i})^T/\Tr[P_{\alpha}^{i}],~ \forall i=1,\dots,n,~\alpha=1,\dots,d_{A_i}^2$\;
    Generate $N$ random $n$-tuples $\{(a_1^{(k)},a_2^{(k)},\dots,a_n^{(k)})\}_{k=1}^{N}$ where $a_i^{(k)}$ is chosen from $\{1,\dots,d_{A_i}^2\}$ with probability $\Pr[a_i^{(k)}=\alpha] = \Tr[P_{\alpha}^{i}]/d_{A_i}$\;
    \For{$k \gets 1$ \KwTo $N$}
    {
        Apply $\map{C}$ to input state $\psi^1_{a_1^{(k)}} \otimes \psi^2_{a_2^{(k)}} \dots \otimes \psi^n_{a_n^{(k)}}$ \;
        For each $j$, measure $j$-th output system with $\{Q_{\beta}^{j}\}_{\beta=1}^{d_{B_j}^2}$. Let the outcomes be $(b_1^{(k)},b_2^{(k)},\dots,b_n^{(k)}), ~ b_j^{(k)} \in \{1,\dots,d_{B_j}^2\}$ \;
    }
    \For{$i \gets 1$ \KwTo $n$}
    {
        \For{$j \gets 1$ \KwTo $n$}
        {
            \uIf{$\estimatechi(\{(a_i^{(k)},b_j^{(k)})\}_{k=1}^{N}, \{ P_\alpha^i \}, \{ Q_\beta^j \}) \leq \chi_-$}
            {
                $\ind_{i,j} \gets \True$ \;
            }
            \Else
            {
                $\ind_{i,j} \gets \False$ \;
            }
        }
    }
    Output $\ind$\;
\end{function}

To understand \autoref{alg:ind}, we observe that $\{(a_i^{(k)},b_j^{(k)})\}_{k=1}^{N}$ are sampled from the same distribution as the outcomes of measuring $C_{A_i,B_j}$ with $\{P_\alpha^i \otimes Q_\beta^j\}$, and $\FuncSty{estimatechi1}$ produces an estimate of $\chi_1(C_{A_i,B_j})$. Conditioned on all $n^2$ number of calls to $\FuncSty{estimatechi1}$ succeed, which has probability no less than $1-n^2\kappa_0$, all the estimations $\chi_1(C_{A_i,B_j})$ have error no greater than $\varepsilon_0$. We summarize these observations in the following lemma:

\begin{lem}\label{lem:ind}
With probability $1-n^2\kappa$, \autoref{alg:ind} produces an output satisfying:
\begin{enumerate}
    \item if $\ind_{i,j} = \False$, then $\chi_1(C_{A_i,B_j}) > \chi_- -\varepsilon$
    \item if $\ind_{i,j} = \True$, then $\chi_1(C_{A_i,B_j}) \leq \chi_- + \varepsilon$
\end{enumerate}
where $\kappa$ and $\varepsilon$ are related in the form of Eq. (\ref{eq:kappaepsilon}) with $d_A := \max_i d_{A_i}$ and $d_B := \max_j d_{B_j}$.
\end{lem}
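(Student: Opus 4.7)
The plan is to reduce the statement to $n^2$ independent applications of \autoref{lem:alg_ind} via a union bound, after first verifying that, for each pair $(i,j)$, the subsequence $\{(a_i^{(k)}, b_j^{(k)})\}_{k=1}^{N}$ extracted inside \autoref{alg:ind} is distributed exactly as $N$ i.i.d. outcomes of the POVM $\{P_\alpha^i \otimes Q_\beta^j\}$ applied to the marginal Choi state $C_{A_i,B_j}$.

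First I would establish the marginal-distribution claim. Each $k$-th trial draws the joint input $a^{(k)} = (a_1^{(k)},\dots,a_n^{(k)})$ according to the product distribution with weights $\Tr[P_{\alpha}^{i}]/d_{A_i}$, sends the product state $\bigotimes_i \psi_{a_i^{(k)}}^i$ through $\map{C}$, and measures each output with $\{Q_\beta^j\}$. By the computation already carried out in Eq. (\ref{eq:channel_statistics}), generalized to the multi-system setting by tensoring the identity on the idle systems, the joint probability of obtaining $(a^{(k)}, b^{(k)})$ equals $\Tr[(P_{a_1^{(k)}}^1 \otimes \cdots \otimes P_{a_n^{(k)}}^n \otimes Q_{b_1^{(k)}}^1 \otimes \cdots \otimes Q_{b_n^{(k)}}^n) \, C]$. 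Tracing out all but the $i$-th input and the $j$-th output marginal coordinates then shows that $(a_i^{(k)}, b_j^{(k)})$ has exactly the distribution $\Tr[(P_\alpha^i \otimes Q_\beta^j)\, C_{A_i,B_j}]$, and independence across $k$ is immediate from the construction.

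Next I would apply \autoref{lem:alg_ind} to each fixed pair $(i,j)$: the call $\FuncSty{estimatechi1}(\{(a_i^{(k)},b_j^{(k)})\}_{k=1}^{N}, \{P_\alpha^i\}, \{Q_\beta^j\})$ returns an estimate $\hat\chi_{i,j}$ satisfying $|\hat\chi_{i,j} - \chi_1(C_{A_i,B_j})| \leq \varepsilon$ except on an event of probability at most $\kappa(\varepsilon)$, with $\kappa(\varepsilon)$ given by Eq. (\ref{eq:kappaepsilon}) and the dimension parameters specialized to $d_{A_i}$ and $d_{B_j}$, both dominated by $d_A := \max_i d_{A_i}$ and $d_B := \max_j d_{B_j}$. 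A union bound over the $n^2$ pairs then yields a global failure probability of at most $n^2 \kappa$.

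On the good event, both conclusions follow directly from the threshold comparison against $\chi_-$ performed by \autoref{alg:ind}. If $\ind_{i,j} = \False$, then $\hat\chi_{i,j} > \chi_-$, hence $\chi_1(C_{A_i,B_j}) \geq \hat\chi_{i,j} - \varepsilon > \chi_- - \varepsilon$; if $\ind_{i,j} = \True$, then $\hat\chi_{i,j} \leq \chi_-$, hence $\chi_1(C_{A_i,B_j}) \leq \hat\chi_{i,j} + \varepsilon \leq \chi_- + \varepsilon$. The only place requiring any care is the first step: one must confirm that reusing the same $N$ trials to estimate every pair does not break the hypothesis of \autoref{lem:alg_ind}, which only needs i.i.d.\ samples for the pair under consideration, not joint independence across pairs, so the union bound is all that is needed.
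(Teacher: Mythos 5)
Your proposal is correct and follows essentially the same route as the paper, which justifies \autoref{lem:ind} by the observation that each $\{(a_i^{(k)},b_j^{(k)})\}_{k=1}^{N}$ is distributed as i.i.d.\ measurements of $C_{A_i,B_j}$ with $\{P_\alpha^i\otimes Q_\beta^j\}$, followed by \autoref{lem:alg_ind} and a union bound over the $n^2$ pairs. Your write-up actually supplies more detail than the paper does (the marginalization of Eq.~(\ref{eq:channel_statistics}) and the remark that reusing the same trials across pairs is harmless because the union bound needs no independence), and both of those points are correct.
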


Now we study how \autoref{alg:ind} could help with causal order discovery.
With some assumptions, the data collected by \autoref{alg:ind} is sufficient to give the exact causal order. The first case is given by the following assumption:

\begin{assumption} \label{ass:totalorder}
For the quantum comb $\map{C}\in\set{Comb}[(A_{\sigma'(1)},B_{\pi'(1)}),\dots,(A_{\sigma'(n)},B_{\pi'(n)})]$, there exists a constant $\chi_{\min}$ such that each input-output pair of systems $A_{\sigma'(i)}$ and $B_{\pi'(j)}$ satisfies the following:
\begin{enumerate}
    \item if $j<i$, $A_{\sigma'(i)}$ and $B_{\pi'(j)}$ are independent, and $\chi_1(C_{A_{\sigma'(i)},B_{\pi'(j)}} ) = 0$;
    \item if $j\geq i$, $A_{\sigma'(i)}$ and $B_{\pi'(j)}$ are correlated, and $\chi_1(C_{A_{\sigma'(i)},B_{\pi'(j)}} ) \geq \chi_{\min}$.
\end{enumerate}
\end{assumption}
\autoref{ass:totalorder} indicates a non-trivial correlation between any input-output pair that the input is before the output. In other words, for any $j \geq i$, $C_{A_{\sigma'(i)},B_{\pi'(j)}} \neq C_{A_{\sigma'(i)}} \otimes C_{B_{\pi'(j)}}$. Meanwhile, \autoref{ass:totalorder} defines a total order of the input (output) wires, and ensures a unique causal order that $\map{C}$ is compatible with, namely the only correct answer for Eq. (\ref{eq:goal_exact}) is $\sigma=\sigma'$ and $\pi=\pi'$. Under this assumption, we observe that the input $A_{\sigma'(k)}$ is correlated with $n-k+1$ output wires including every output $B_{\pi(i)}$ with $i\geq k$, and is independent with the other outputs. By counting the number of output wires that an input is correlated with, we can determine which tooth the input belongs to, and sort the input wires. The same method can be used to order the output wires. Following this logic, we can adopt the procedure in \autoref{alg:order_linear}.

\begin{algorithm}[H]
    \caption{Quantum causal order discovery algorithm based on \autoref{ass:totalorder}}\label{alg:order_linear}
    \Indentp{0.5em}
    \SetInd{0.5em}{1em}
    \SetKwInOut{Preprocessing}{Preprocessing}
    \SetKwInOut{Input}{Input}
    \SetKwInOut{Output}{Output}
    \SetKwFor{Whenever}{whenever}{do}{end}

    \ResetInOut{Input}
    \Input{Black-box access of a quantum comb $\map{C}$ with input wires $A_1,\dots,A_n$ and output wires $B_1,\dots,B_n$, number of queries $N$}
    \ResetInOut{Output}
    \Output{Ordering of inputs and outputs} 
    \BlankLine

    $\ind \gets \independent(\map{C}, N, \chi_{\min}/2)$\;
    \For{$k \gets 1$ \KwTo $n$}
    {
        $c_A(k) \gets | \{ i | \ind_{k,i}=\False \}|$ \;
        $c_B(k) \gets | \{ i | \ind_{i,k}=\False \}|$ \;
    }
    Sort $\{A_1,\dots,A_n\}$ in descending order of $c_A(k)$ as $\{A_{\sigma(1)},\dots,A_{\sigma(n)}\}$\;
    Sort $\{B_1,\dots,B_n\}$ in ascending order of $c_B(k)$ as $\{B_{\pi(1)},\dots,B_{\pi(n)}\}$\;
    Output $(A_{\sigma(1)},B_{\pi(1)}),\dots,(A_{\sigma(n)},B_{\pi(n)})$\;
\end{algorithm}


Here the threshold $\chi_-$ is set to $\chi_-=\chi_{\min}/2$. Setting $\varepsilon_0 = \chi_{\min}/3$, by \autoref{lem:ind}, with probability $1-n^2 \kappa_0$, the independence tests produce the ground truth: (i) if $\ind_{i,j}=\True$, then $\chi_1(C_{A_i,B_j}) \leq \chi_- + \varepsilon_0 < \chi_{\min}$, and according to \autoref{ass:totalorder}, $A_i$ and $B_j$ are independent; (ii) if $\ind_{i,j}=\False$, then $\chi_1(C_{A_i,B_j}) > \chi_- - \varepsilon_0 > 0$, and according to \autoref{ass:totalorder}, $A_i$ and $B_j$ are correlated.
In this case, \autoref{ass:totalorder} further ensures that $c_A(1),\dots,c_A(n)$ ($c_B(1),\dots,c_B(n)$) are distinct and \autoref{alg:order_linear} always gives the correct causal order.

The running time of \autoref{alg:order_linear} mostly attributes to \autoref{alg:ind}. \autoref{alg:ind} contains two main parts: the query part and the estimation part. The query part invokes the channel $N$ times, and each time $n$ state preparation and $n$ measurements are performed. Here we assume that state preparation on a $d$-dimensional Hilbert space takes time $O(d)$ \cite{plesch2011quantum} and performing a POVM with $m$ outcomes takes time $O(m^2)$, according to the Naimark's dilation theorem that converts $m$-outcome POVM to projective measurement via a $m$-dimensional unitary gate \cite{peres2006quantum} implementable with $O(m^2)$ basic gates \cite{mottonen2004quantum}.  To perform the POVM with $m=d_B^2$ elements, we need $O(d_B^4)$ gates, and thus the query part takes a total time $O(Nn(d_A+d_B^4))$. The estimation part invokes \autoref{alg:c1} for $n^2$ times, and each call to \autoref{alg:c1} takes time $O(N+d_A^4d_B^4)$, $O(N)$ for computing $\hat{p}$ and $O(d_A^4d_B^4)$ for handling matrices and vectors. We don't count the time of inverting the matrix $F$ ($G$) since they are known before the execution of the algorithm and their inverses can be computed in advance. To sum up, the total running time of \autoref{alg:order_linear} is $O(Nn(n+d_A+d_B^4)+n^2d_A^4d_B^4)$, where the first term dominates if we take $N=\Omega(d_A^4d_B^4)$.

Therefore, we have the following theorem:

\begin{theo}\label{thm:order_linear}
For a quantum comb $\map{C} \in \set{Comb}[(A_{\sigma(1)},B_{\pi(1)}),\dots,(A_{\sigma(n)},B_{\pi(n)})]$ satisfying \autoref{ass:totalorder}, with probability $1-\kappa$, \autoref{alg:order_linear} outputs the correct causal order $(A_{\sigma(1)},B_{\pi(1)}),\dots,(A_{\sigma(n)},B_{\pi(n)})$ with number of queries to $\map{C}$ in the order of
\begin{align} \label{eq:order_linear}
    N = O\left(d_A^4 d_B^4 \lambda_{\min}^{-2} \chi_{\min}^{-2} \log(n d_A d_B \kappa^{-1}) \right)
\end{align}
and running time $O( Nn(n + d_A + d_B^4))$, where
\begin{align}
d_A := \max_i d_{A_i}, ~d_B := \max_j d_{B_j}, ~\lambda_{\min}:=\min\left\{ \min_i \lambda_{\min}\left(\sum_\alpha\skketbbra{P_\alpha^i} \right),\min_j \lambda_{\min}\left(\sum_\beta\skketbbra{Q_\beta^j}\right)\right\}
\end{align}

\end{theo}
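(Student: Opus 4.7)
The plan is to apply \autoref{lem:ind} (our analysis of \FuncSty{independence}) with carefully chosen precision parameters so that, in the high-probability event, the boolean array $\ind$ exactly encodes the ground-truth independence relation between input--output pairs. Then a short combinatorial argument using \autoref{ass:totalorder} shows that the sorted counts $c_A$ and $c_B$ uniquely recover $\sigma'$ and $\pi'$. Finally, I would substitute the chosen parameters into Eq.~(\ref{eq:alg_ind_N}) and track the per-query cost of state preparation and measurement.

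First I would fix $\chi_-:=\chi_{\min}/2$ (as done in \autoref{alg:order_linear}) and pick $\varepsilon_0:=\chi_{\min}/3$, then apply a union bound across the $n^2$ calls to \FuncSty{estimatechi1}, using per-call confidence $\kappa_0=\kappa/n^2$. By \autoref{lem:ind}, in the resulting event of probability at least $1-\kappa$, every entry $\ind_{i,j}$ is correct in the threshold sense: $\ind_{i,j}=\True$ forces $\chi_1(C_{A_i,B_j}) \le \chi_-+\varepsilon_0 < \chi_{\min}$, while $\ind_{i,j}=\False$ forces $\chi_1(C_{A_i,B_j}) > \chi_- - \varepsilon_0 > 0$. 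Combined with the dichotomy in \autoref{ass:totalorder} ($\chi_1$ is either exactly $0$ or at least $\chi_{\min}$), the array $\ind$ exactly reports whether $A_i$ and $B_j$ are independent.

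Next, I would observe that under \autoref{ass:totalorder}, input $A_{\sigma'(k)}$ is correlated with exactly those outputs $B_{\pi'(j)}$ with $j\ge k$, so $c_A(\sigma'(k))=n-k+1$, and similarly $c_B(\pi'(k))=k$. Both maps are bijections onto $\{1,\dots,n\}$, so the counts are pairwise distinct; sorting the input labels by $c_A$ in descending order and the output labels by $c_B$ in ascending order therefore recovers $\sigma'$ and $\pi'$ uniquely, establishing correctness.

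For complexity, substituting $\varepsilon_0=\Theta(\chi_{\min})$ and $\kappa_0=\kappa/n^2$ into Eq.~(\ref{eq:alg_ind_N}) with $d_A:=\max_i d_{A_i}$ and $d_B:=\max_j d_{B_j}$ gives $N=O(d_A^4 d_B^4 \lambda_{\min}^{-2}\chi_{\min}^{-2}\log(nd_A d_B\kappa^{-1}))$, matching Eq.~(\ref{eq:order_linear}). For running time, each of the $N$ queries uses parallel local state preparation ($O(nd_A)$) and local POVMs implemented via Naimark dilation ($O(nd_B^4)$), while the $n^2$ post-processing calls to \autoref{alg:c1} cost $O(N+d_A^4 d_B^4)$ each; since $N=\Omega(d_A^4 d_B^4)$ the matrix-handling term is absorbed, yielding total time $O(Nn(n+d_A+d_B^4))$. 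The only step requiring real bookkeeping is threading the parameters $\varepsilon_0$ and $\kappa_0$ through the definition of $\kappa(\varepsilon)$ in Eq.~(\ref{eq:kappaepsilon}) so that the logarithmic factor comes out as $\log(nd_A d_B\kappa^{-1})$ rather than, for instance, $\log(n^2 d_A d_B\kappa^{-1})$; the difference is absorbed into the $O(\cdot)$ constant, but it is the one place where care is needed.
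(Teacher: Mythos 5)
Your proposal is correct and follows essentially the same route as the paper: set $\chi_-=\chi_{\min}/2$ and $\varepsilon_0=\chi_{\min}/3$, union-bound over the $n^2$ calls to \FuncSty{estimatechi1} so that $\ind$ reports the ground-truth independence relation, use \autoref{ass:totalorder} to get $c_A(\sigma'(k))=n-k+1$ and $c_B(\pi'(k))=k$ (hence distinct counts and a unique sort), and substitute into Eq.~(\ref{eq:alg_ind_N}) together with the state-preparation/Naimark-dilation cost accounting. No gaps; the handling of the $\log(n^2\cdots)$ versus $\log(n\cdots)$ factor is indeed absorbed into the constant as you note.
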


\autoref{alg:ind} could also be adopted to the case where the comb has no memory. The assumption is the following:

\begin{assumption} \label{ass:memoryless}
The quantum comb $\map{C}\in\set{Comb}[(A_{\sigma'(1)},B_{\pi'(1)}),\dots,(A_{\sigma'(n)},B_{\pi'(n)})]$ has no memory. In other words, it is the tensor product of $n$ channels, $\map{C} = \bigotimes_{i=1}^n \map{C}_i$ with $\map{C}_i: \spc{H}_{A_{i }}\to \spc{H}_{B_{\pi''(i)}}$ for some permutation $\pi''$.
\end{assumption}

Since each output is related to at most one input and each input affects at most one output, after the independence tests, we can obtain the causal order by matching the input-output pairs as in the following algorithm.
\begin{algorithm}[H]
    \caption{Quantum causal order discovery algorithm based on \autoref{ass:memoryless}\label{alg:memoryless}}
    \Indentp{0.5em}
    \SetInd{0.5em}{1em}
    \SetKwInOut{Input}{Input}
    \SetKwInOut{Output}{Output}
    \SetKw{Break}{break}
    \SetKwData{True}{true}
    \SetKwData{False}{false}
    \SetKwData{ind}{ind}
    \SetKwData{matched}{matched}
    \SetKwFunction{independent}{independence}

    \ResetInOut{Input}
    \Input{Black-box access of a quantum comb $\map{C}$ with input wires $A_1,\dots,A_n$ and output wires $B_1,\dots,B_n$, number of queries $N$, threshold $\chi_-$}
    \ResetInOut{Output}
    \Output{Ordering of inputs and outputs} 
    \BlankLine

    $\ind \gets \independent(\map{C}, N, \chi_-)$\;
    Initialize Boolean arrays $\matched_A$ and $\matched_B$ to $\False$\;
    \For{$i \gets 1$ \KwTo $n$}
    {
        \For{$j \gets 1$ \KwTo $n$}
        {
            \If{$\ind_{i,j} = \False$}
            {
                $\pi(i) \gets j$\;
                $\matched_A(i) \gets \True$\;
                $\matched_B(j) \gets \True$\;
                \Break\tcp*{Exit the loop for $j$}
            }
        }
    }\label{line:memoryless_match}
    \For{$i \gets 1$ \KwTo $n$}
    {
        \If{$\matched_A(i) = \False$}
        {
            \For{$j \gets 1$ \KwTo $n$}
            {
                \If{$\matched_B(j) = \False$}
                {
                    $\pi(i) \gets j$\;
                    $\matched_A(i) \gets \True$\;
                    $\matched_B(j) \gets \True$\;
                    \Break\tcp*{Exit the loop for $j$}
                }
            }
        }
    }
    Output $(A_{1},B_{\pi(1)}),\dots,(A_{n},B_{\pi(n)})$\;
\end{algorithm}
\autoref{alg:memoryless} first matches the pairs $(A_i,B_j)$ that are correlated, and according to \autoref{ass:memoryless}, $A_i$ and $B_j$ must belong to the same tooth. For the inputs and outputs left unmatched by the first step, they can be matched arbitrarily, since the inputs and outputs have no correlation and are compatible with any causal order.

If we further assume that for each correlated input-output pair $(A_i,B_j)$, one has $\chi_1(\rho_{A_i,B_j}) \geq \chi_{\min}$, then \autoref{alg:memoryless} produces an exact answer (\ref{eq:goal_exact}) if one take $\chi_- = \chi_{\min}/3$, and the query complexity is given by Eq. (\ref{eq:order_linear}). On the other hand, if such assumption of $\chi_{\min}$ does not exist, the algorithm still produces an answer, which is approximate (\ref{eq:goal_approx}). The error and running time of \autoref{alg:memoryless} is given in the following theorem, whose proof is in Appendix \ref{app:memoryless}.

\begin{theo}\label{thm:memoryless}
For a quantum comb $\map{C} \in \set{Comb}[(A_{\sigma'(1)},B_{\pi'(1)}),\dots,(A_{\sigma'(n)},B_{\pi'(n)})]$ satisfying \autoref{ass:memoryless}, with probability $1-\kappa$, \autoref{alg:memoryless} outputs a causal order $(A_1,B_{\pi(1)}),\dots,(A_n,B_{\pi(n)})$ satisfying
\begin{align}
    \exists \map{D} \in \set{Comb}[ (A_1,B_{\pi(1)}),\dots,(A_n,B_{\pi(n)}) ], ~ \|\map{C} - \map{D} \|_\diamond \leq \varepsilon
\end{align}
with number of queries to $\map{C}$ in the order of
\begin{align}
    N = O\left(n^2 d_A^6 d_B^4 \lambda_{\min}^{-2} \varepsilon^{-2} \log(n d_A d_B \kappa^{-1}) \right)
\end{align}
and running time $O( Nn(n + d_A + d_B^4))$, where
\begin{align}
d_A := \max_i d_{A_i}, ~d_B := \max_j d_{B_j}, ~\lambda_{\min}:=\min\left\{ \min_i \lambda_{\min}\left(\sum_\alpha\skketbbra{P_\alpha^i} \right),\min_j \lambda_{\min}\left(\sum_\beta\skketbbra{Q_\beta^j}\right)\right\}
\end{align}

\end{theo}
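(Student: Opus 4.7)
The plan is to condition on the high-probability event that the call to \texttt{independence} inside \autoref{alg:memoryless} produces estimates consistent with \autoref{lem:ind}. On this event, which has probability at least $1 - n^2 \kappa_0$ (with $\kappa_0,\varepsilon_0$ linked via Eq.~(\ref{eq:kappaepsilon})), the array $\ind$ satisfies $\ind_{i,j} = \False \Rightarrow \chi_1(C_{A_i,B_j}) > \chi_- - \varepsilon_0$ and $\ind_{i,j} = \True \Rightarrow \chi_1(C_{A_i,B_j}) \leq \chi_- + \varepsilon_0$. Under \autoref{ass:memoryless} the Choi state factorises, so $C_{A_i,B_j} = C_{A_i} \otimes C_{B_j}$ and hence $\chi_1(C_{A_i,B_j}) = 0$ whenever $j \neq \pi''(i)$. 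Choosing $\chi_- \geq \varepsilon_0$, the contrapositive forces $\ind_{i,j} = \False \Rightarrow j = \pi''(i)$, so every pair identified in the first loop of \autoref{alg:memoryless} is a genuine input-output pair of $\map{C}$, and the output permutation $\pi$ agrees with the true $\pi''$ on all matched indices.

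Next I would construct an explicit witness $\map{D}$. Let $U_A$ collect the inputs left unmatched after the first loop; then $U_B := \{\pi''(i) : i \in U_A\}$ is exactly the set of unmatched outputs, and both $\pi''|_{U_A}$ and $\pi|_{U_A}$ are bijections $U_A \to U_B$. For each $i \in U_A$ the test returned $\ind_{i,\pi''(i)} = \True$, so $\chi_1(C_{A_i,B_{\pi''(i)}}) \leq \chi_- + \varepsilon_0$; since the Choi state of any channel has input marginal $I/d_{A_i}$, this is exactly the statement that the Choi state of $\map{C}_i$ is within trace distance $\chi_- + \varepsilon_0$ of the Choi state of the constant channel $\map{K}_i(\rho) := \Tr[\rho]\, C_{B_{\pi''(i)}}$. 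Define $\map{D} := \bigotimes_i \map{D}_i$ by taking $\map{D}_i := \map{C}_i$ for matched $i$ and $\map{D}_i(\rho) := \Tr[\rho]\, C_{B_{\pi(i)}}$ (a constant channel $A_i \to B_{\pi(i)}$) for $i \in U_A$; by construction $\map{D} \in \set{Comb}[(A_1,B_{\pi(1)}),\dots,(A_n,B_{\pi(n)})]$.

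For the diamond-norm estimate I would compare $\map{C}$ with the intermediate channel $\map{C}'$ that keeps the original pairing $\pi''$ but replaces each unmatched $\map{C}_i$ by $\map{K}_i$. Telescoping together with sub-multiplicativity of the diamond norm and $\|\map{E}\|_\diamond = 1$ for any channel $\map{E}$ gives $\|\map{C} - \map{C}'\|_\diamond \leq \sum_{i \in U_A} \|\map{C}_i - \map{K}_i\|_\diamond$, and each summand is controlled by the Choi-to-diamond inequality $\|\map{C}_i - \map{K}_i\|_\diamond \leq d_{A_i} \|C_i - K_i\|_1 \leq d_A(\chi_- + \varepsilon_0)$ already invoked in the paragraph following \autoref{thm:unitary}. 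In $\map{C}'$ every unmatched input is traced out and each unmatched output $B_j$ is independently prepared in the fixed state $C_{B_j}$, so the wiring between the unmatched ends is immaterial and $\map{C}'$ coincides with $\map{D}$ as an operation on $\bigotimes_i A_i$. This yields $\|\map{C} - \map{D}\|_\diamond \leq n d_A (\chi_- + \varepsilon_0)$; setting $\chi_- = \varepsilon_0 = \varepsilon/(2 n d_A)$ and $\kappa_0 = \kappa/n^2$ and substituting into Eq.~(\ref{eq:alg_ind_N}) yields the claimed sample complexity, while the running-time analysis carries over verbatim from \autoref{thm:order_linear}.

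The most delicate step is the re-factorisation identity $\map{C}' = \map{D}$: I must argue carefully that replacing the unmatched teeth with constant channels decouples the unmatched inputs from any particular output, so that the arbitrary assignment made in the second loop of \autoref{alg:memoryless} produces the same overall operation regardless of how the unmatched bijection is chosen. Once this is in place, the proof reduces to a routine combination of \autoref{lem:ind}, the tensor-product triangle inequality for the diamond norm, and the Choi-to-diamond bound.
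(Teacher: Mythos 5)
Your proposal is correct and follows essentially the same route as the paper's proof: condition on the success event of \autoref{lem:ind}, use \autoref{ass:memoryless} to show each matched pair is genuine ($\pi=\pi''$ on matched indices), build $\map{D}$ by keeping the true teeth on matched pairs and substituting constant channels $\rho\mapsto\Tr[\rho]\,C_{B_{\pi(i)}}$ on unmatched ones, observe that the tensor product of constant channels is independent of how the unmatched wires are paired, and telescope the diamond norm into per-tooth Choi-to-diamond bounds of size $O(d_A\chi_-)$. The parameter choices $\chi_-=\varepsilon_0=\varepsilon/(2nd_A)$ and $\kappa_0=\kappa/n^2$ match the paper's, so the complexity bound follows identically.
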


\subsection{Limitation of local observations}


The independence test \autoref{alg:order_linear2} in the previous section is only able to test the independence between pairs of input and output wires. This approach does not solve the general case, since there are cases where testing between only two wires would never produce a correct causal order, such as the comb in \autoref{fig:notsimple}. 

\begin{figure}[H]
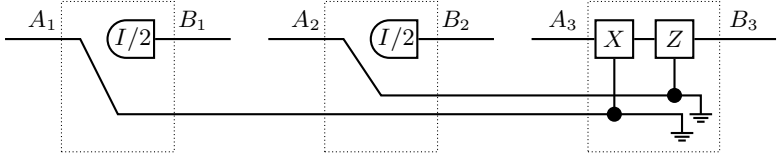

$$\begin{mathtikz}
    \coordinate (in1) at (-1,1);
    \coordinate (out1) at (2,1);
    \coordinate (in2) at (2.5,1);
    \coordinate (out2) at (5.5,1);
    \coordinate (in3) at (6,1);
    \coordinate (out3) at (9.25,1);
    \coordinate (grd1) at (8,-0.25);
    \coordinate (grd2) at (8.25,0);
    \node[prepare] (p1) at ($(out1)-(1.3,0)$) {$I/2$};
    \node[prepare] (p2) at ($(out2)-(1.3,0)$) {$I/2$};
    \draw[thick] (p1) -- node[above] {$B_1$} (out1);
    \draw[thick] (p2) -- node[above] {$B_2$} (out2);
    \draw[thick] (in1) -- node[above] {$A_1$} ++(1,0) -- ++(0.5,-1) -- ++(1,0) -| node (line1) {} (grd1);
    \draw[thick] (in2) -- node[above] {$A_2$} ++(1,0) -- ++(0.5,-0.75) -- ++(1,0) -| node (line2) {} (grd2);
    \begin{scope}[shift={(grd1)}]\drawground\end{scope}
    \begin{scope}[shift={(grd2)}]\drawground\end{scope}
    \node[tensor] (X) at ($(in3)+(1.1,0)$) {$X$};
    \node[tensor] (Z) at ($(X)+(0.8,0)$) {$Z$};
    \draw[thick] (in3) -- node[above] {$A_3$} (X) -- (Z) -- +(0.5,0) -- node[above] {$B_3$} (out3);
    \draw[thick] (X) -- (X|-line1);
    \fill (X|-line1) circle [radius=0.1];
    \draw[thick] (Z) -- (Z|-line2);
    \fill (Z|-line2) circle [radius=0.1];
    \draw[densely dotted] ($(in1)+(0.75,0.5)$) rectangle ($(out1|-line1)+(-0.75,-0.5)$);
    \draw[densely dotted] ($(in2)+(0.75,0.5)$) rectangle ($(out2|-line1)+(-0.75,-0.5)$);
    \draw[densely dotted] ($(in3)+(0.75,0.5)$) rectangle ($(out3|-line1)+(-0.75,-0.5)$);
\end{mathtikz}$$
\caption{\label{fig:notsimple} The channel acts on three qubits and outputs three qubits. $I$ is the identity operator and $I/2$ is the maximally mixed state. $X$ and $Z$ denote Pauli matrices, and here they are controlled gates with control systems $A_1$ and $A_2$, respectively. The dotted boxes indicates the teeth of the comb. The ground symbol means discarding the system. For any input $A_i$ and output $B_j$, the marginal Choi state $C_{A_iB_j}=(I\otimes I)/2$ is a product state, and thus observations on input-output pairs give no information about the causal order. In turn, $C_{A_1A_2A_3B_3} \neq C_{A_1A_2B_3}\otimes I_{A_3}/2$ and is not a product state, indicating $\map{C}$ is not compatible to arbitrary causal orders. The causal order can be inferred if one observes more than two wires simultaneously.}
\end{figure}

\section{Discussion}


In this article, we give the first efficient quantum causal order discovery algorithm whose complexity scales polynomially with the number of systems. We model the causal order with quantum combs, and formulate the problem as looking for an ordering of teeth that a given channel is compatible with. Our algorithm access the quantum process in a black-box way, and runs in polynomial time whenever the process has a low Kraus rank, corresponding to processes with low noise and little information loss.
We also propose a second approach that involves only local state preparation and local measurements, and infers the causal order by collecting statistics of the channel's input-output correlations. This approach features a low query complexity, but is limited to cases where the causal order can be inferred from local observations, for example, when each input has a non-trivial influence on all outputs after it, and when the comb is a tensor product of single-system channels.

The algorithms mentioned above, however, do not cover all possible quantum combs. This is not surprising, since it is also difficult to infer general causal orders in the classical scenario. In classical causal order discovery, we often formulate the problem by graphical models and solve it by structure learning algorithms such as PC algorithm \cite{spirtes2000causation}, Greedy Equivalence Search \cite{chickering2002optimal}, and Max-Min Hill Climbing \cite{tsamardinos2006max}. However, these algorithms do not cover all possible instances because they require assumptions of the input data such as causal sufficiency, meaning that no variables are hidden, and causal faithfulness, meaning that the conditional independence between any two variable sets must be visible from the d-separation condition in causal graph. 
In the worst case, the complexity of these algorithms becomes exponential.

The cases we have not solved involve quantum combs that are composed of channels with higher Kraus rank and  and at the same time have a causal order that cannot be inferred from local observations. In this case, our first algorithm may be inefficient since it requires the comb to be low-rank, and the approach based on local observations is unable to discover the causal order. Yet from another point of view, the comb having a higher rank means that there are more information loss and random noise in the process, and thus the correlations are weakened by the loss and overwhelmed by the noise, making it difficult for any observer to infer the causal order. With this idea, we can assign an operational meaning to the ``correct'' causal order by limiting the power of the observer: channel $\map{C}$ is compatible with causal order $(A_{\sigma(1)},B_{\pi(1)}),\dots,(A_{\sigma(n)},B_{\pi(n)})$ if any observer, who has limited knowledge about $\map{C}$ and has limited computing power, cannot distinguish $\map{C}$ from another channel $\map{D}\in\set{Comb}[(A_{\sigma(1)},B_{\pi(1)}),\dots,(A_{\sigma(n)},B_{\pi(n)})]$. This is in contrast to the objective in Eq.~(\ref{eq:goal_approx}), where we have inserted the diamond norm distance (\autoref{thm:memoryless}) or the trace distance (\autoref{thm:unitary}), which operationally measures the distinguishability of two channels for an observer who knows the classical description of the channels.
The idea of limiting the observer is similar to probably approximately correct (PAC) learning \cite{aaronson2007learnability,cheng2015learnability,caro2020pseudo}, where the observer can only make polynomial number of observations sampled from a predetermined distribution. The application of PAC learning to causal order discovery requires a new framework of causal orders outside the scope of this article, and is left as a future work.

Efficient causal order discovery algorithms are crucial to ensure the stability of quantum networks. With such algorithms, we are able to detect and fix the instability emerging from the structure changes of the network. Combined with techniques to stabilize point-to-point communication such as error-correction codes, the quantum network will serve as a reliable middle layer connecting the applications
and the physical network, giving solutions to errors and interfaces for quantum protocols and algorithms \cite{kimble2008quantum,elliott2002building,buhrman2003distributed,barz2012demonstration}.

The applications of causal order discovery are not limited to quantum networks. Our independence tests will also
be applicable to quantum devices and circuits for testing whether they have correct links and input-output
correlations. This will be an important quality assurance procedure for the production of reliable components
in quantum computers and networks. Causal order discovery can be also used to detect the latent structure of quantum systems, by applying the causal order discovery algorithms to detect the correlations in multipartite states.
Knowing the causal structure of the system may allow us to ignore unnecessary correlations and represent states efficiently with, for example, tensor networks \cite{fannes1992finitely,verstraete2008matrix}, which allow efficient tomography \cite{cramer2010efficient}, simulation \cite{verstraete2008matrix,shi2006classical,vidal2008class} and compression \cite{bai2020quantum} of multipartite states.

\medskip
{\bf Acknowledgements.} 
This work was supported by the National Natural Science Foundation of China through grant 11675136, the Hong Kong Research Grant Council through grant 17300918, and though the Senior Research Fellowship Scheme SRFS2021-7S02, the Croucher Foundation, and the John Templeton Foundation through grant 61466, The Quantum Information Structure of Spacetime (qiss.fr). Research at the Perimeter Institute is supported by the Government of Canada through the Department of Innovation, Science and Economic Development Canada and by the Province of Ontario through the Ministry of Research, Innovation and Science. The opinions expressed in this publication are those of the authors and do not necessarily reflect the views of the John Templeton Foundation.
The work of M. Hayashi was supported in part by Guangdong Provincial Key Laboratory (Grant No. 2019B121203002).

\bibliography{causal}

\appendix

\section{Proof of \autoref{thm:unitary}} \label{app:unitary}

The coefficients $\{p_x\}$ in \autoref{def:IC} can be computed with the following lemma:
\begin{lem} \label{lem:IC}
For an informationally complete POVM $\{P_x\}$, let $F :=\sum_x\kketbbra{P_x}$ be its frame operator. Then any linear operator $X\in L(\spc{H})$ can be decomposed as $X=\sum_x p_x P_x$ where $p_x = \bbra{P_x} F^{-1} \kket{X}$. The same is true for an informationally complete set of states $\{\ket{\psi_x}\}$: $X=\sum_x p'_x {\psi_x}$ where $p'_x = \bbra{\psi_x} F'^{-1} \kket{X}$, $F':=\sum_x\kketbbra{\psi_x}$ and $\kket{\psi_x}:= ({\psi_x} \otimes I) \kket{I}$.
\end{lem}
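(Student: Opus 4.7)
The plan is to vectorize the decomposition claim and recognize the formula as the standard pseudoinverse reconstruction formula associated with a frame.

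First, I would reformulate the assertion $X = \sum_x p_x P_x$ by applying $(\cdot \otimes I)\kket{I}$ to both sides, turning it into the vector equation $\kket{X} = \sum_x p_x \kket{P_x}$ in $\spc{H}\otimes\spc{H}$. This is equivalent to the original statement because the map $Y \mapsto \kket{Y}$ is a linear bijection from $L(\spc{H})$ onto $\spc{H}\otimes\spc{H}$.

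Next I would show that $F$ is invertible. By \autoref{def:IC}, informational completeness means $\{P_x\}$ spans $L(\spc{H})$, which after vectorization means $\{\kket{P_x}\}$ spans $\spc{H}\otimes\spc{H}$. Hence the positive operator $F = \sum_x \kketbbra{P_x}$ has trivial kernel (any vector annihilated by $F$ would be orthogonal to the span) and is therefore invertible on $\spc{H}\otimes\spc{H}$.

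The core calculation is then just a one-liner. With $p_x := \bbra{P_x} F^{-1} \kket{X}$, I would write
\begin{align}
\sum_x p_x \kket{P_x} \;=\; \sum_x \kket{P_x}\bbra{P_x} F^{-1} \kket{X} \;=\; F\, F^{-1} \kket{X} \;=\; \kket{X},
\end{align}
so de-vectorizing gives $X = \sum_x p_x P_x$. For the informationally complete set of states $\{\psi_x\}$, the argument is identical: since $\{\psi_x\}$ is proportional to an informationally complete POVM, the vectors $\{\kket{\psi_x}\}$ span $\spc{H}\otimes\spc{H}$, so $F'$ is invertible, and the same one-line computation with $\kket{P_x}$ replaced by $\kket{\psi_x}$ and $F$ replaced by $F'$ yields $\kket{X} = \sum_x p'_x \kket{\psi_x}$ with $p'_x = \bbra{\psi_x} F'^{-1} \kket{X}$.

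There is no genuine obstacle here; the only subtlety worth flagging in the write-up is noting non-uniqueness when the POVM is overcomplete (so the lemma provides one explicit decomposition rather than the unique one), and noting that the hermiticity of POVM elements makes $\bbra{P_x}$ the naive adjoint of $\kket{P_x}$ without extra conjugations.
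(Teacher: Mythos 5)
Your proposal is correct and follows essentially the same route as the paper: vectorize the decomposition, then verify the coefficient formula via $\kket{X} = FF^{-1}\kket{X} = \sum_x \kket{P_x}\bbra{P_x}F^{-1}\kket{X}$. Your added remarks on the invertibility of $F$ and on non-uniqueness in the overcomplete case are sensible refinements but do not change the argument.
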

\begin{proof}
    \begin{align}
    X = \sum_x p_x P_x
    \iff& (X \otimes I) \kket{I}= \sum_x  p_{x} (P_x \otimes I) \kket{I} \\
     \iff& \kket{X} = \sum_x  p_{x} \kket{P_x},
    \end{align}
    where $\{ \kket{P_x} \}$ forms a basis of space $\map{H} \otimes \map{H}$, and $p_{x}$ are the coefficients of $\kket{X}$ in this basis.
    From
    \begin{align}
    \kket{X} = & F F^{-1} \kket{X}\\
    =&\sum_x \kketbbra{P_x} F^{-1} \kket{X} \\
    =&\sum_x \bbra{P_x} F^{-1} \kket{X} \kket{P_x},
    \end{align}
    we obtain the expression of $p_{x}$. The expression of $p'_x$ is identical by substituting $P_x$ with ${\psi_x}$.
\end{proof}

Since $\{P_x\}$ forms a basis for linear operators in $\spc{H}$, $\{\kket{P_x}\}$ forms a basis for $\spc{H}\otimes\spc{H}$, and $F$ is always full-rank.

\begin{lem} \label{lem:ic_diamond}
Let $\map{C}: S(\map{H}_{A}) \to S(\map{H}_{B})$ be a channel . Let $\{{\psi_k}\}_{k=1}^{d_{A}^2}$ be an informationally complete set of states for system $A$.
Let $\rho_k := \map{C}( {\psi_k} )$. If $\forall k, ~ \|\rho_k - \rho_1\|_1 \leq \varepsilon$, then there exists a constant map $\map D: S(\map{H}_{A}) \to S(\map{H}_{B})$, $\map{D}(\rho) := \rho_1 \Tr[\rho]$ such that
\begin{align}
    \|\map C - \map D\|_\diamond \leq d_{A}^2 \lambda_{\min}^{-1/2}  \varepsilon
\end{align}
where $D$ is the Choi state of $\map{D}$, and $\lambda_{\min}$ is the minimum eigenvalue of the frame operator of $\{{\psi_k}\}_{k=1}^{d_{A}^2}$.
\end{lem}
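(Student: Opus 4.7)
The plan is to control the diamond norm by decomposing an arbitrary test state over the informationally complete set $\{\psi_k\}$ and propagating the hypothesis $\|\rho_k - \rho_1\|_1 \leq \varepsilon$ termwise. Since the constant channel satisfies $\map{D}(\psi_k) = \rho_1 \Tr[\psi_k] = \rho_1$, on the chosen basis $(\map{C} - \map{D})(\psi_k) = \rho_k - \rho_1$. Any state $\sigma \in S(\spc{H}_A \otimes \spc{H}_A)$ can be written on the first tensor factor as $\sigma = \sum_{a,b} \sigma_{ab} \otimes \ket{a}\bra{b}$ with $\sigma_{ab} \in L(\spc{H}_A)$, and \autoref{lem:IC} expands each slice in the frame as $\sigma_{ab} = \sum_k p_k^{(ab)} \psi_k$ with $p_k^{(ab)} = \bbra{\psi_k} F'^{-1} \kket{\sigma_{ab}}$. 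Collecting by $k$ yields $\sigma = \sum_k \psi_k \otimes X_k$ with $X_k := \sum_{a,b} p_k^{(ab)} \ket{a}\bra{b}$.

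Applying $(\map{C}-\map{D})\otimes\map{I}$ then gives $((\map{C}-\map{D})\otimes\map{I})(\sigma) = \sum_k (\rho_k - \rho_1) \otimes X_k$, and the triangle inequality together with the hypothesis reduces the diamond-norm estimate to
\begin{align}
\|((\map{C}-\map{D})\otimes\map{I})(\sigma)\|_1 \leq \varepsilon \sum_k \|X_k\|_1 .
\end{align}
Because this bound must hold for every state $\sigma$, the task becomes that of controlling $\sum_k \|X_k\|_1$ uniformly in terms of $\lambda_{\min}$.

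For this main step I would first use $\|X_k\|_1 \leq d_A \|X_k\|_2$ (since $X_k$ lives on the $d_A$-dimensional ancilla) and Cauchy–Schwarz over the $d_A^2$ indices to get $\sum_k \|X_k\|_1 \leq d_A^2 \bigl(\sum_k \|X_k\|_2^2\bigr)^{1/2}$. Exploiting $\sum_k \kketbbra{\psi_k} = F'$ and the Hermiticity of $F'^{-1}$, a direct calculation gives $\sum_k \|X_k\|_2^2 = \sum_{a,b}\sum_k |p_k^{(ab)}|^2 = \sum_{a,b} \bbra{\sigma_{ab}} F'^{-1} \kket{\sigma_{ab}}$; then $F'^{-1} \leq \lambda_{\min}^{-1} I$ combined with $\sum_{a,b} \|\sigma_{ab}\|_2^2 = \|\sigma\|_2^2 \leq \|\sigma\|_1^2 = 1$ yields $\sum_k \|X_k\|_2^2 \leq \lambda_{\min}^{-1}$, and hence $\sum_k \|X_k\|_1 \leq d_A^2 \lambda_{\min}^{-1/2}$, which substituted back gives the claimed diamond-norm bound. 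I expect the only delicate point to be the bookkeeping between the block decomposition on $\spc{H}_{A'}$ and the vectorization appearing in \autoref{def:frame}; once that identification is in place, the spectral bound and the Cauchy–Schwarz step are routine.
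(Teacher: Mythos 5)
Your proposal is correct and follows essentially the same route as the paper's proof: expand the test state over the informationally complete set via \autoref{lem:IC}, apply the triangle inequality to reduce to $\|\rho_k-\rho_1\|_1\leq\varepsilon$, and control the total coefficient mass by Cauchy--Schwarz together with $F'^{-1}\leq\lambda_{\min}^{-1}I$. The only difference is bookkeeping --- you block-decompose a general mixed test state and collect the frame coefficients into operators $X_k$, whereas the paper Schmidt-decomposes a pure test state and bounds $\sum_k|p_{ijk}|$ slice by slice --- and both organizations yield the identical constant $d_A^2\lambda_{\min}^{-1/2}$.
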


\begin{proof}
    The diamond norm can be evaluated as:
    \begin{align}
        \|\map C - \map D\|_\diamond = \sup_{\ket{\Phi}} \| ((\map C - \map D) \otimes \map I_{A'})(\ketbra{\Phi}) \|_1
    \end{align}
    where $\ket{\Phi}$ is a pure state on systems $A$ and $A'$, and $A'$ is a reference system with $\dim\map{H}_{A'} = \dim\map{H}_{A} = d_{A}$. Consider the Schmidt decomposition of $\ket{\Phi}$ as
    \begin{align}
        \ket{\Phi} = \sum_{k=1}^r s_k \ket{\alpha_k}_A \ket{\beta_k}_{A'}
    \end{align}
    where $r\leq d_{A}$, $\{\ket{\alpha_k}_A\}$ and $\{\ket{\beta_k}_{A'}\}$ are sets of orthonormal states, $\sum_k s_k^2 = 1$, $s_k>0, \forall k$. Since $\{\psi_k\}_{k=1}^{d_{A}^2}$ is an informationally complete set, by \autoref{lem:IC}, one has
    \begin{align}
        \ket{\alpha_i}\bra{\alpha_j} = \sum_{k=1}^{d_{A}^2}  p_{ijk} {\psi_k} \,,
    \end{align}
    where $p_{ijk} = \bbra{\psi_k} F^{-1} \ket{\alpha_i}\ket{\overline{\alpha_j}}$ with $F :=\sum_k\kketbbra{\psi_k}$ being the frame operator.
    Then
    \begin{align}
        \ketbra{\Phi} = &~ \sum_{i,j=1}^r s_i s_j \ket{\alpha_i}\bra{\alpha_j} \otimes \ket{\beta_i}\bra{\beta_j} \\
        = &~ \sum_{i,j=1}^r \sum_{k=1}^{d_{A}^2} s_i s_j p_{ijk} {\psi_k} \otimes \ket{\beta_i}\bra{\beta_j} \,.
    \end{align}
    Note that $\forall i,j$,
    \begin{align}
        \sum_{k=1}^{d_{A}^2} |p_{ijk}| \leq &~ \sqrt{d_{A}^2 \sum_{k=1}^{d_{A}^2} |p_{ijk}|^2} \\
        = &~ d_{A} \sqrt{ \sum_{k=1}^{d_{A}^2} \bra{\alpha_i}\bra{\overline{\alpha_j}} F^{-1} \kketbbra{\psi_k} F^{-1} \ket{\alpha_i}\ket{\overline{\alpha_j}} } \\
        = &~ d_{A} \sqrt{ \bra{\alpha_i}\bra{\overline{\alpha_j}} F^{-1} \ket{\alpha_i}\ket{\overline{\alpha_j}} } \\
        \leq &~ d_{A} \lambda_{\min}^{-1/2} \,.
    \end{align}
    Last,
    \begin{align}
         &~ \| (\map C \otimes \map I_{A'})(\ketbra{\Phi}) - (\map D \otimes \map I_{A'})(\ketbra{\Phi}) \|_1 \\
        =&~ \left\| \sum_{i,j=1}^r \sum_{k=1}^{d_{A}^2} s_i s_j p_{ijk} (\map C - \map D)({\psi_k}) \otimes \ket{\beta_i}\bra{\beta_j} \right\|_1 \\
        =&~ \left\| \sum_{i,j=1}^r \sum_{k=1}^{d_{A}^2} s_i s_j p_{ijk} (\rho_k - \rho_1) \otimes \ket{\beta_i}\bra{\beta_j} \right\|_1 \\
        \leq &~  \sum_{i,j=1}^r \sum_{k=1}^{d_{A}^2} s_i s_j |p_{ijk}| \left\| (\rho_k - \rho_1) \otimes \ket{\beta_i}\bra{\beta_j} \right\|_1 \\
        \leq &~  \sum_{i}^r s_i \sum_{j}^r s_j \sum_{k=1}^{d_{A}^2} |p_{ijk}| \left\| \rho_k - \rho_1 \right\|_1 \\
        \leq &~  \sqrt{d_{A}\sum_{i}^r s_i^2}\sqrt{d_{A}\sum_{j}^r s_j^2} ~ \left( d_{A}\lambda_{\min}^{-1/2} \right) \varepsilon \\
        \leq &~  d_{A}^2  \lambda_{\min}^{-1/2}  \varepsilon
    \end{align}
\end{proof}

\begin{lem}\label{lem:2norm}
Let $\map{C}: S(\map{H}_{A_1 A_2}) \to S(\map{H}_{B_1 B_2})$ be a channel with Choi state $C$. Let $\{\psi_b\}_{b=1}^{d_{A_2}^2}$ be an informationally complete set of states for system $A_2$.
Let $\rho_b := \Tr_{B_2}[ (\map{C} \otimes I_{A'_1}) (\ketbra{\Phi^+}_{A_1 A'_1} \otimes \psi_{b,A_2})]$. If $\forall b, ~ \|\rho_b - \rho_1\|_2 \leq \varepsilon$
, then there exists $\map D \in \set{Comb}[(A_1,B_1),(A_2, \emptyset )]$ such that
\begin{align}
    \|\Tr_{B_2}[C] - D\|_1 \leq \sqrt{2 d_{B_2}\rank(C)} ~ d_{A_2}^2 \lambda_{\min}^{-1/2}  \varepsilon \,,
\end{align}
where $D$ is the Choi state of $\map{D}$ and $\set{Comb}[(A_1,B_1),(A_2,\emptyset)]$ denotes combs whose second tooth does not have an output.
\end{lem}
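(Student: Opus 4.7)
The plan is to recast the hypothesis as closeness of outputs of a single auxiliary channel $\map{E}:S(\spc{H}_{A_2})\to S(\spc{H}_{A'_1 B_1})$, defined by $\map{E}(\sigma):=\Tr_{B_2}[(\map{C}\otimes \map{I}_{A'_1})(\ketbra{\Phi^+}_{A_1 A'_1}\otimes \sigma)]$, so that $\rho_b=\map{E}(\psi_b)$, and then apply \autoref{lem:ic_diamond} to the informationally complete probe set $\{\psi_b\}$. A direct calculation using the factorization $\ketbra{\Phi^+}_{A_1 A'_1}\otimes\ketbra{\Phi^+}_{A_2 A''_2}=\kketbbra{I}_{A_1 A_2,A'_1 A''_2}/(d_{A_1}d_{A_2})$ shows that the Choi state of $\map{E}$ equals $\Tr_{B_2}[C]$ up to relabeling of the reference systems $A'_1,A''_2$ as $A_1,A_2$. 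Hence any diamond-norm bound on $\map{E}-\map{E}'$ will translate, via $\|\text{Choi}(\map{E})-\text{Choi}(\map{E}')\|_1 \leq \|\map{E}-\map{E}'\|_\diamond$, into the desired trace-norm bound on $\Tr_{B_2}[C]-D$.

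To apply \autoref{lem:ic_diamond}, I first need to convert the hypothesis $\|\rho_b-\rho_1\|_2 \leq \varepsilon$ into a trace-norm bound via Eq.~(\ref{eq:norm12}). The critical step is to control $\rank(\rho_b)$: the pre-trace state $(\map{C}\otimes \map{I}_{A'_1})(\ketbra{\Phi^+}_{A_1 A'_1}\otimes \psi_b)$ is the image of a pure state under $\map{C}\otimes\map{I}$, so its rank is at most the Kraus rank $\rank(C)$; the partial trace over $B_2$ can then inflate the rank by at most a factor $d_{B_2}$, yielding $\rank(\rho_b)\leq d_{B_2}\rank(C)$. Combining $\rank(\rho_b-\rho_1)\leq\rank(\rho_b)+\rank(\rho_1)$ with Eq.~(\ref{eq:norm12}) gives $\|\rho_b-\rho_1\|_1\leq\sqrt{2 d_{B_2}\rank(C)}\,\varepsilon$. \autoref{lem:ic_diamond} then produces a constant channel $\map{E}'(\sigma)=\rho_1\Tr[\sigma]$ with $\|\map{E}-\map{E}'\|_\diamond\leq d_{A_2}^2\lambda_{\min}^{-1/2}\sqrt{2 d_{B_2}\rank(C)}\,\varepsilon$.

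It remains to check that the Choi state $D=\rho_1\otimes I_{A_2}/d_{A_2}$ of $\map{E}'$, viewed on systems $A_1 A_2 B_1$, actually belongs to $\set{Comb}[(A_1,B_1),(A_2,\emptyset)]$. By \autoref{prop:comb} with $n=2$ and trivial $B_2$, the two marginal conditions are $D_{A_1 A_2 B_1}=D_{A_1 B_1}\otimes I_{A_2}/d_{A_2}$ (immediate from the factorized form of $D$) and $D_{A_1 A_2}=I/(d_{A_1}d_{A_2})$, which reduces to $\Tr_{B_1}[\rho_1]=I_{A_1}/d_{A_1}$. The latter follows from the trace-preservation of $\map{C}$: replacing $\Tr_{B_1 B_2}\circ(\map{C}\otimes\map{I}_{A'_1})$ by $\Tr_{A_1 A_2}\otimes\map{I}_{A'_1}$ in the defining expression of $\rho_1$ leaves $\Tr_{A_1}[\ketbra{\Phi^+}_{A_1 A'_1}]=I_{A'_1}/d_{A_1}$. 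Combining the steps yields the claimed bound. The main subtlety to watch for is the rank-inflation in the partial-trace step: a naive bound $\rank(\rho_b)\leq\rank(C)$ would miss the factor $d_{B_2}$, so one must remember that tracing out $B_2$ can multiply the rank by $d_{B_2}$.
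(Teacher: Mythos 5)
Your proposal is correct and follows essentially the same route as the paper's proof: define the auxiliary channel $\map{C}'(\rho)=\Tr_{B_2}[(\map{C}\otimes\map{I}_{A'_1})(\ketbra{\Phi^+}_{A_1A'_1}\otimes\rho)]$ whose Choi state is $\Tr_{B_2}[C]$, convert the Hilbert--Schmidt hypothesis to a trace-norm bound via Eq.~(\ref{eq:norm12}) using the rank bound $d_{B_2}\rank(C)$, and invoke \autoref{lem:ic_diamond} to obtain the constant channel whose Choi state $\rho_1\otimes I_{A_2}/d_{A_2}$ lies in $\set{Comb}[(A_1,B_1),(A_2,\emptyset)]$. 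Your explicit verification of the comb conditions and the correctly normalized Choi state of the constant channel are, if anything, slightly more careful than the paper's version.
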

\begin{proof}
Defining channel $\map{C}': \spc{H}_{A_2} \to \spc{H}_{A_1'}\otimes\spc{H}_{B_1}$ as $\map{C}'(\rho) := \Tr_{B_2}[ (\map{C} \otimes I_{A'_1}) (\ketbra{\Phi^+}_{A_1 A'_1} \otimes \rho )]$, we have $\rho_b = \map{C}'(\psi_b)$ and $\rank(\map{C}') \leq d_{B_2} \rank(\map{C})$.
According to Ref. \cite{coles2019strong}, we have $\|\rho_b - \rho_1\|_1^2 \leq (\rank(\rho_b)+\rank(\rho_1)) \|\rho_b - \rho_1\|_2^2 \leq 2\rank(\map{C}') \varepsilon^2 \leq 2d_{B_2}\rank(\map{C}) \varepsilon^2$, and $\|\rho_b - \rho_1\|_1 \leq \sqrt{2d_{B_2}\rank(C)} \, \varepsilon$. According to \autoref{lem:ic_diamond}, there exists a constant map $\map D': S(\spc{H}_{A_2}) \to S(\spc{H}_{A_1'}\otimes \spc{H}_{B_1})$, $\map{D}'(\rho) := \rho_1 \Tr[\rho]$ such that
\begin{align}
    \|\map C' - \map D'\|_\diamond \leq d_{A_2}^2 \lambda_{\min}^{-1/2}  \sqrt{2 d_{B_2}\rank(C)} \, \varepsilon \,.
\end{align}
Choose $D=\rho_1\otimes I_{A_2}$ as the Choi state of $\map D'$. By choosing a proper  $\map D \in \set{Comb}[(A_1,B_1),(A_2,\emptyset )]$, $D$ is alsothe  Choi state for $\map{D}$ since $\Tr_{B_1}[D] = \Tr_{B_1}[\rho_1 \otimes I_{A_2}] = I_{A'_1} \otimes I_{A_2}$. By observing that $\Tr_{B_2}[C]$ is the Choi state of $\map C'$, we arrive at the following
\begin{align}
    \|\Tr_{B_2}[C] - D\|_1 \leq \|\map C' - \map D'\|_\diamond \leq \sqrt{2 d_{B_2}\rank(C)} ~ d_{A_2}^2 \lambda_{\min}^{-1/2}  \varepsilon \,.
\end{align}
\end{proof}

\begin{lem} \label{lem:p-tr}
For quantum states $\rho$ and $\sigma$ in Hilbert space $\spc{H}_A$, if $\|\rho - \sigma\|_1 \leq \varepsilon$, then for any purification of $\rho$, $\ket{\psi}_{AR}$ with a ancillary system $R$ satisfying $\dim R \geq \rank(\sigma)$, there exists a purification of $\sigma$, $\ket{\phi}_{AR}$ such that
\begin{align}
    \| \ketbra{\psi} - \ketbra{\phi} \|_1 \leq 2\sqrt\varepsilon
\end{align}
\end{lem}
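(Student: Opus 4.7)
The plan is to combine Uhlmann's theorem with the standard Fuchs--van de Graaf inequalities. First I would recall that for any two quantum states $\rho,\sigma\in S(\spc{H}_A)$ one has the bound $1-F(\rho,\sigma)\leq \tfrac{1}{2}\|\rho-\sigma\|_1$, where $F(\rho,\sigma):=\Tr\sqrt{\sqrt{\rho}\,\sigma\sqrt{\rho}}$ denotes the fidelity. Combined with the hypothesis $\|\rho-\sigma\|_1\leq \varepsilon$, this immediately gives $F(\rho,\sigma)\geq 1-\varepsilon/2$.

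Next I would invoke Uhlmann's theorem, which asserts that any purification $\ket{\psi}_{AR}$ of $\rho$ (which requires $\dim R \geq \rank(\rho)$, already implied since $\ket{\psi}_{AR}$ is assumed to be a purification) extends to a purification $\ket{\phi}_{AR}$ of $\sigma$ in the same space, provided $\dim R \geq \rank(\sigma)$ (which is exactly the hypothesis of the lemma), such that $|\braket{\psi}{\phi}|=F(\rho,\sigma)$. Concretely, one fixes a reference purification of $\sigma$ and applies a suitable isometry on $R$ to maximize the inner product with $\ket{\psi}_{AR}$; this produces the desired $\ket{\phi}_{AR}$.

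Finally, I would use the well-known identity for the trace distance between pure states: $\|\ketbra{\psi}-\ketbra{\phi}\|_1 = 2\sqrt{1-|\braket{\psi}{\phi}|^2}$. Substituting $|\braket{\psi}{\phi}|=F(\rho,\sigma)$ and estimating
\begin{align}
1-F(\rho,\sigma)^2 = (1-F(\rho,\sigma))(1+F(\rho,\sigma)) \leq 2\bigl(1-F(\rho,\sigma)\bigr) \leq \|\rho-\sigma\|_1 \leq \varepsilon \,,
\end{align}
yields the desired bound $\|\ketbra{\psi}-\ketbra{\phi}\|_1\leq 2\sqrt{\varepsilon}$. There is essentially no serious obstacle: the entire argument is a textbook chain of inequalities, and the only subtle point is confirming that the ancilla dimension in the hypothesis is sufficient for Uhlmann's theorem to produce a purification of $\sigma$ inside $\spc{H}_A\otimes\spc{H}_R$ (which it is, since $\dim R\geq \rank(\sigma)$). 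I would just note the slight bookkeeping difference that the paper uses the trace norm without the $1/2$ factor, so the Fuchs--van de Graaf bound contributes a factor of $\varepsilon/2$ rather than $\varepsilon$, and this is what produces the exact constant $2\sqrt{\varepsilon}$ in the statement.
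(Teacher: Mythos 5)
Your argument is correct and follows essentially the same route as the paper: Uhlmann's theorem to lift the fidelity to the purifications, the Fuchs--van de Graaf inequality to lower-bound the fidelity by $1-\varepsilon/2$, and the exact pure-state formula for the trace distance; the only cosmetic difference is that you work with the root fidelity while the paper squares it, which amounts to the same chain of estimates. Your explicit remark that the hypothesis $\dim R \geq \rank(\sigma)$ is what makes Uhlmann's theorem applicable within the fixed ancilla is a point the paper's proof leaves implicit.
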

\begin{proof}
By Uhlmann's Theorem, there exists a purification of $\sigma$, $\ket{\phi}_{AR}$ such that $F(\rho,\sigma) = F(\ketbra{\psi}, \ketbra{\phi})$.
\begin{align}
    \frac12 \| \ketbra{\psi} - \ketbra{\phi} \|_1 \leq \sqrt{1-F(\ketbra{\psi}, \ketbra{\phi})} = \sqrt{1-F(\rho,\sigma)} \leq \sqrt{1-\left(1-\frac12\|\rho - \sigma\|_1\right)^2} \leq \sqrt{\varepsilon - \varepsilon^2/4} \leq \sqrt{\varepsilon}
\end{align}
\end{proof}

\begin{lem}\label{lem:2ton}
Let $\map{C}$ be a quantum channel and $C$ be its Choi state. Let $C^{(i)}:= \Tr_{A_{i+1}B_{i+1}\dots A_{n}B_{n}}[C]$ be the Choi operator on first $i$ input-output pairs. For $n\geq 2$, if
\begin{align} \label{eq:CDi}
    \forall i=2,\dots,n,~ \exists \map{D}^{(i)\circ} \in \set{Comb}[(A_1\dots A_{i-1},B_1\dots B_{i-1}), (A_i, \emptyset )], ~ \|C^{(i)\circ} - D^{(i)\circ}\|_1 \leq \delta \,,
\end{align}
where $C^{(i)\circ}:= \Tr_{B_i}[C^{(i)}]$, then
\begin{align}
    \exists \map{D} \in \set{Comb}[(A_1,B_1), \dots , (A_n, B_n)], ~ \|C - D\|_1 \leq (4n-6)\sqrt\delta \,,
\end{align}
where $D$ is the Choi state of $\map{D}$.
\end{lem}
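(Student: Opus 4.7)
My plan is to prove the lemma by induction on $n$, using \autoref{lem:p-tr} as the main tool.

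For the base case $n=2$, the hypothesis says $\|C^{(2)\circ} - D^{(2)\circ}\|_1 \leq \delta$ with $C^{(2)\circ} = \Tr_{B_2}[C]$ and $D^{(2)\circ} = D^{(2)\circ}_{A_1 B_1} \otimes I_{A_2}/d_{A_2}$. Pick a purification $\ket{C}$ of the full Choi state $C$ on some ancilla $R$; then $\ket{C}$ also purifies $C^{(2)\circ}$ on the joint ancilla $B_2 R$. By \autoref{lem:p-tr} there is a purification $\ket{D}$ of $D^{(2)\circ}$ on $B_2 R$ with $\|\ketbra{C} - \ketbra{D}\|_1 \leq 2\sqrt{\delta}$, and $D := \Tr_R[\ketbra{D}]$ satisfies $\Tr_{B_2}[D] = D^{(2)\circ}$ and $\|C - D\|_1 \leq 2\sqrt{\delta}$. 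The factorized form of $D^{(2)\circ}$ enforces the $k=1$ comb condition on $D$, while $\Tr_{B_1}[D^{(2)\circ}_{A_1 B_1}] = I_{A_1}/d_{A_1}$ gives the $k=0$ condition, so $D$ is a valid $2$-comb, and $(4\cdot 2 - 6)\sqrt{\delta} = 2\sqrt{\delta}$ matches.

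For the inductive step from $n-1$ to $n$, I would perform two Uhlmann alignments sandwiching a single call to the inductive hypothesis, aiming for each Uhlmann to contribute $2\sqrt\delta$ and the induction to contribute $(4n-10)\sqrt\delta$, for a telescoping total of $(4n-6)\sqrt\delta$. Concretely: (i) apply the base-case argument to $\|C^{(n)\circ}-D^{(n)\circ}\|_1\leq\delta$, producing a full-space state $E$ with $\Tr_{B_n}[E]=D^{(n)\circ}$ and $\|E-C\|_1\leq 2\sqrt\delta$, so the $n$-th comb condition is exactly enforced on $E$ by the factorization of $D^{(n)\circ}$; (ii) set $\Sigma := \Tr_{A_n}[D^{(n)\circ}]$, observe that partial-trace non-expansiveness gives $\|\Sigma - C^{(n-1)}\|_1\leq\delta$ and hence, by a further triangle inequality, $\|\Sigma^{(i)\circ}-D^{(i)\circ}\|_1\leq 2\delta$ for every $i=2,\ldots,n-1$, then invoke the inductive hypothesis to obtain $\tilde D\in\set{Comb}[(A_1,B_1),\ldots,(A_{n-1},B_{n-1})]$ close to $\Sigma$; (iii) re-align $E$ via a second Uhlmann step so that its marginal on $A_1\ldots A_n B_1\ldots B_{n-1}$ becomes $\tilde D\otimes I_{A_n}/d_{A_n}$ instead of $D^{(n)\circ}$, giving the final $D$, which is a valid $n$-comb because its $n$-th tooth factorization is enforced by construction and its $(n-1)$-tooth substructure is $\tilde D$.

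The main obstacle is controlling the error accumulation so the bound is the claimed linear $(4n-6)\sqrt\delta$, rather than the $\delta^{1/4}$-scale bound that falls out of a naive cascade: if step (iii)'s Uhlmann receives the $\tilde D$-vs-$\Sigma$ mismatch of order $(4n-10)\sqrt\delta$ from step (ii), it returns a penalty of order $2\sqrt{(4n-10)\sqrt\delta}$, which is only $\delta^{1/4}$. To achieve the linear bound I would need each Uhlmann step to align against an $O(\delta)$ mismatch drawn directly from the hypothesis, not against an accumulated $O(\sqrt\delta)$ mismatch from a previous step. My best guess is to keep both Uhlmann steps anchored to the same purification of $C$ (aligning ancillas globally) rather than cascading through the intermediate state $E$, so that the second Uhlmann pays only $2\sqrt\delta$ for its own hypothesis-level alignment and the $(4n-10)\sqrt\delta$ contribution from the inductive hypothesis enters additively via the triangle inequality. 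The careful book-keeping of which purifying ancillas are identified across the two Uhlmann steps is the technically delicate part of the proof.
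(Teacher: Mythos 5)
Your base case is correct and matches the paper's. You have also correctly diagnosed the central difficulty of the inductive step: a naive cascade feeds the accumulated $O(\sqrt\delta)$ error from the induction into a second Uhlmann alignment, which returns only an $O(\delta^{1/4})$ bound. But your proposed resolution --- keeping both Uhlmann steps anchored to the same purification of $C$ --- does not actually supply a mechanism for transferring the $(n-1)$-comb structure of $\tilde D$ into the full-space state $E$ without paying a square root on the $\tilde D$-versus-$\Sigma$ mismatch. Aligning ancillas does not by itself let you replace the marginal $\Sigma\otimes I_{A_n}/d_{A_n}$ of $E$ by $\tilde D\otimes I_{A_n}/d_{A_n}$; any purification-based re-alignment at that point is again an Uhlmann step applied to an $O(\sqrt\delta)$-sized discrepancy, so the $\delta^{1/4}$ blow-up you identified is not avoided. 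The inductive step is therefore not carried out.

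The missing idea, which is how the paper closes the gap, is to strengthen the induction hypothesis: instead of asserting that $C^{(i)}$ is close to some comb, assert that some \emph{purification} $C^{(i)*}$ of $C^{(i)}$ is close in trace norm to an \emph{isometric} comb $\map E^{(i)*}$, whose last tooth outputs $B_i$ together with a purifying memory $M_i$. In the step from $i-1$ to $i$, one purifies $D^{(i)\circ}$ to an isometric $D^{(i)*}$ whose last tooth $\map D_i$ is an isometry, and defines $\map E^{(i)*}$ by composing the inductively given $\map E^{(i-1)*}$ with that same $\map D_i$. Because $\map D_i$ preserves the trace norm, $\|D^{(i)*}-E^{(i)*}\|_1$ equals $\|(D')^{(i-1)*}-E^{(i-1)*}\|_1$, which splits by the triangle inequality into a fresh Uhlmann term $2\sqrt\delta$ (for $\|(D')^{(i-1)*}-C^{(i-1)*}\|_1$, a hypothesis-level $O(\delta)$ mismatch) plus the accumulated error $\varepsilon^{(i-1)}$ entering purely additively. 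Each iteration therefore adds exactly $4\sqrt\delta$, giving $(4n-6)\sqrt\delta$. This is precisely the mechanism you say you ``need'' but do not construct: the accumulated error is propagated through a trace-norm-preserving isometry, never through an Uhlmann square root. (A secondary issue: applying the induction to $\Sigma=\Tr_{A_n}[D^{(n)\circ}]$ with the weakened hypothesis $2\delta$ would yield $(4(n-1)-6)\sqrt{2\delta}$ from the recursive call and spoil the stated constant; the paper instead keeps the induction anchored to $C^{(i-1)}$ itself.)
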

\begin{proof}~

In this proof, calligraphic letters denote linear maps and italic letters denote the corresponding Choi states.

Define a proposition $P(i)$: there exists a purification of $C^{(i)}$, denoted as $C^{(i)*} \in S(\spc{H}_{A_1 \dots A_{i}} \otimes \spc{H}_{B_1 \dots B_{i}} \otimes \spc{H}_{M_i})$, and an isometric channel $\map E^{(i)*} \in \set{Comb}[(A_1,B_1),\dots, (A_{i-1}, B_{i-1}), (A_i, B_iM_i)]$ such that
\begin{align}
    \|C^{(i)*} - E^{(i)*} \|_1 \leq \varepsilon^{(i)} \,.
\end{align}

Consider $P(2)$. Pick any purification of $C^{(2)}$ as $C^{(2)*}$, and $C^{(2)*}$ is also a purification of $C^{(2)\circ}$. Pick a purification of $D^{(2)\circ}$ as $D^{(2)*}$ such that its trace distance to $C^{(2)*}$ is minimized. Since $\|C^{(2)\circ} - D^{(2)\circ}\|_1 \leq \delta$, according to \autoref{lem:p-tr}, we have $\|C^{(2)*} - D^{(2)*}\|_1 \leq 2\sqrt\delta$. By choosing $E^{(2)*}=D^{(2)*}$, we prove $P(2)$ with $\varepsilon^{(2)} = 2\sqrt{\delta}$.

Assume $P(i-1)$ holds with $\varepsilon^{(i-1)}$, then there exists a purification of $C^{(i-1)}$, denoted as $C^{(i-1)*} \in S(\spc{H}_{A_1 \dots A_{i-1}} \otimes \spc{H}_{B_1 \dots B_{i-1}} \otimes \spc{H}_{M_{i-1}})$, and an isometric channel $\map E^{(i-1)*} \in \set{Comb}[(A_1,B_1),\dots, (A_{i-2}, B_{i-2}), (A_{i-1}, B_{i-1}M_{i-1})]$ such that
\begin{align}\label{eq:CEminus1}
    \|C^{(i-1)*} - E^{(i-1)*} \|_1 \leq \varepsilon^{(i-1)} \,.
\end{align}

Pick a purification of $C^{(i)}$ as $C^{(i)*} \in S(\spc{H}_{A_1 \dots A_{i}} \otimes \spc{H}_{B_1 \dots B_{i}} \otimes \spc{H}_{M_{i}})$ where $\spc{H}_{M_{i}}$ is a large enough Hilbert space. Then $C^{(i)*}$ is also a purification of $C^{(i)\circ}$. Pick a purification of $D^{(i)\circ}$ as $D^{(i)*}$ such that its trace distance to $C^{(i)*}$ is minimized. Since $\|C^{(i)\circ} - D^{(i)\circ}\|_1 \leq \delta$ (\ref{eq:CDi}), according to \autoref{lem:p-tr}, we have
\begin{align} \label{eq:CDstar}
\| C^{(i)*} - D^{(i)*} \|_1 \leq 2\sqrt{\|C^{(i)\circ} - D^{(i)\circ}\|_1} \leq 2\sqrt\delta \,.
\end{align}

Let $(D')^{(i-1)} := \Tr_{A_i}[ D^{(i)\circ} ]$. Pick a purification of $(D')^{(i-1)}$ as $(D')^{(i-1)*}\in S(\spc{H}_{A_1 \dots A_{i-1}} \otimes \spc{H}_{B_1 \dots B_{i-1}} \otimes \spc{H}_{M_{i-1}})$  such that its trace distance to $C^{(i-1)*}$ is minimized. From \autoref{lem:p-tr} we have
\begin{align}\label{eq:DprimeC}
    \|(D')^{(i-1)*} - C^{(i-1)*}\|_1 \leq 2\sqrt{\|(D')^{(i-1)} - C^{(i-1)}\|_1 } = 2\sqrt{\| \Tr_{A_i}[D^{(i)\circ}-C^{(i)\circ}] \|_1 } \leq 2\sqrt{\| D^{(i)\circ} - C^{(i)\circ} \|_1 } \leq 2 \sqrt\delta
\end{align}

Note that $\map D^{(i)*} \in \set{Comb}[(A_1\dots A_{i-1},B_1\dots B_{i-1}), (A_i, B_iM_i)]$ since $\map{D}^{(i)\circ} \in \set{Comb}[(A_1\dots A_{i-1},B_1\dots B_{i-1}), (A_i, \emptyset )]$. Therefore $\map D^{(i)*}$ can be written as a concatenation of two channels with the first channel being $(D')^{(i-1)*}$:

\begin{align}
    \map D^{(i)*} =&~ (\map I_{B_1\dots B_{i-1}} \otimes \map{D}_i) \circ ((\map D')^{(i-1)*} \otimes \map I_{A_i})  \label{eq:Distar}     
\end{align}
where $\map D_i: S(\spc{H}_{A_i}\otimes\spc{H}_{M_{i-1}}) \to S(\spc{H}_{B_i}\otimes\spc{H}_{M_i})$ is an isometric channel since $D^{(i)*}$ is isometric.

Define
\begin{align}
    \map E^{(i)*} := (\map I_{B_1\dots B_{i-1}} \otimes \map{D}_i) \circ (\map E^{(i-1)*} \otimes \map I_{A_i})
\end{align}
and let $E^{(i)*}$ be its Choi state. This is an isometric channel because both $\map{D}_i$ and $\map E^{(i-1)*}$ are. Define $\tilde{\map{D}}_i(\rho): S(\spc{H}_{M_{i-1}}) \to S(\spc{H}_{A'_i}\otimes \spc{H}_{B_i} \otimes \spc{H}_{M_i})$ as $\tilde{\map{D}}_i(\rho) :=  (\map{D}_i \otimes \map I_{A'_i})( \rho \otimes \ketbra{\Phi^+}_{A_i A'_i})$, which is an isometric channel and preserves trace norm. Then we have
\begin{align}
\nonumber \| D^{(i)*} - E^{(i)*} \|_1 & = \left\| (\tilde{\map{D}}_i \otimes \map I_{A'_1 \dots A'_i} \otimes \map I_{B_1 \dots B_{i-1}}) \left( (D')^{(i-1)*} - E^{(i-1)*} \right) \right\|_1 \\
\nonumber    & = \| (D')^{(i-1)*} - E^{(i-1)*} \|_1 \\
\nonumber    & \leq \| (D')^{(i-1)*} - C^{(i-1)*} \|_1 + \| C^{(i-1)*} - E^{(i-1)*} \|_1 \\
    & \leq 2\sqrt\delta + \varepsilon^{(i-1)} \label{eq:DEstar}
\end{align}
where the last step uses Eqs. (\ref{eq:DprimeC}) and (\ref{eq:CEminus1}).

Last, from Eqs. (\ref{eq:CDstar}) and (\ref{eq:DEstar}),
\begin{align}
\| C^{(i)*} - E^{(i)*} \|_1 \leq \| C^{(i)*} - D^{(i)*} \|_1 + \| D^{(i)*} - E^{(i)*} \|_1 \leq 4\sqrt\delta + \varepsilon^{(i-1)}
\end{align}
Therefore, we can choose $\varepsilon^{(i)} = \varepsilon^{(i-1)} + 4\sqrt\delta$. Performing induction with base case $\varepsilon^{(2)} = 2\sqrt\delta$, we have
\begin{align}
    \varepsilon^{(n)} = (4n-6)\sqrt\delta
\end{align}
Picking $D := \Tr_{M_n}[ E^{(n)*}]$, we have
\begin{align}
    \| C - D \|_1 \leq \| C^{(n)*} - E^{(n)*} \|_1 \leq \varepsilon^{(n)} = (4n-6)\sqrt\delta
\end{align}
\end{proof}

\begin{lem}\label{lem:rank}
For a constant channel $\map{C}: S(\spc{H}_{A})\to S(\spc{H}_B)$ satisfying $\map{C}(X) = \Tr[X] \sigma_B$, one has $\rank(\sigma_B) =\rank(\map{C})/ \dim\spc{H}_A$, where $\rank(\map{C})$ is the Kraus rank of channel $\map{C}$.
\end{lem}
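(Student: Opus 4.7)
The plan is to compute the Choi state of the constant channel $\map{C}$ explicitly and read off its rank, which equals the Kraus rank by the standard Choi-Jamio\l{}kowski correspondence.

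First I would expand the Choi state using the definition $C = (\map{C}\otimes\map{I})(\kketbbra{I}/d_A)$, where $\kket{I}\sbbra{I} = \sum_{i,j} \ket{i}\bra{j}_A \otimes \ket{i}\bra{j}_{A'}$. Applying $\map{C}$ to the first factor and using $\map{C}(\ket{i}\bra{j}) = \Tr[\ket{i}\bra{j}]\,\sigma_B = \delta_{ij}\,\sigma_B$, the cross terms vanish and one obtains
\begin{align}
C \;=\; \frac{1}{d_A}\sum_i \sigma_B \otimes \ket{i}\bra{i} \;=\; \sigma_B \otimes \frac{I_{A'}}{d_A}.
\end{align}

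The next step is to invoke the fact that the Kraus rank of a channel equals the rank of its Choi state (this is the standard Choi-Jamio\l{}kowski correspondence, and also follows from writing $\map{C}(\rho) = \sum_k K_k \rho K_k^\dagger$ via the spectral decomposition of $C$). Since $C$ is a tensor product, its rank factorizes:
\begin{align}
\rank(\map{C}) \;=\; \rank(C) \;=\; \rank(\sigma_B)\cdot \rank(I_{A'}/d_A) \;=\; \rank(\sigma_B)\cdot d_A.
\end{align}
Rearranging gives the claimed identity $\rank(\sigma_B) = \rank(\map{C})/\dim\spc{H}_A$.

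There is no real obstacle here since the computation is essentially a one-line calculation once the Choi state is written down; the only subtlety is being careful about the normalization convention for the Choi state used in the paper (the $1/d_A$ factor does not affect the rank) and justifying $\rank(\map{C}) = \rank(C)$, which is immediate from how Kraus operators are extracted from a purification of the Choi state.
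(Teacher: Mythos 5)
Your proof is correct and follows essentially the same route as the paper's: compute the Choi state of the constant channel as $\sigma_B \otimes I_{A'}/d_A$, use the factorization of rank over tensor products, and identify the Kraus rank with the rank of the Choi state. The paper's version is just a terser statement of the same calculation.
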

\begin{proof}
   Since $\map{C}(X) = \Tr[X] \sigma_B$, the Choi state of $\map{C}$ is $C = I_A/d_A \otimes \sigma_B$. Then $\rank(C) = \rank(I_A)\rank(\sigma_B)$, and $\rank(\sigma_B) = \rank(C) / \rank(I_A) = \rank(\map{C})/ \dim\spc{H}_A$.
\end{proof}
\begin{lem}\label{lem:rank2}
Under \autoref{ass:unitary}, assuming $\dim A_i = \dim B_j, \forall i,j$, conditioned on that each independence test asserts the correct answer, during the execution of \autoref{alg:unitary}, it always holds true that $\rank(\map{O}) \leq \rank(\map{C}) \leq d_M$.
\end{lem}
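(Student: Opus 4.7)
The plan is to handle the two inequalities in turn. The bound $\rank(\map{C})\le d_M$ is immediate from \autoref{ass:unitary}: composing the teeth into a single unitary $U$ acting on all inputs and the memory, one gets $\map{C}(\rho)=\Tr_{M_n}[U(\rho\otimes\ketbra{\psi_0})U^\dagger]$, exhibiting a Stinespring dilation with pure ancilla and environment of dimension $d_M$, so the Kraus rank of $\map{C}$ is at most $d_M$.

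For the bound $\rank(\map{O})\le\rank(\map{C})$, I will argue inductively that each application of the update rule in \autoref{alg:unitary} does not increase the Kraus rank, under the stated conditioning. Let $\map{O}$ be the channel currently held by the algorithm and $O$ its Choi state, and suppose the independence test has correctly certified $\map{O}\in\set{Comb}[(A_{\neq x},B_{\neq y}),(A_x,B_y)]$ for the returned pair $(A_x,B_y)$. A direct computation with the Choi-state definition, using the identity $I_{A_x}/d_{A_x}=\Tr_{A_x'}[\kketbbra{I}/d_{A_x}]$, shows that the Choi state of the updated channel $\rho\mapsto\Tr_{B_y}[\map{O}(\rho\otimes I_{A_x}/d_{A_x})]$ equals the marginal $O_{A_{\neq x},B_{\neq y}}$. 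Applying \autoref{prop:comb} to the two-tooth decomposition gives
\begin{align*}
\Tr_{B_y}[O]=O_{A_{\neq x},B_{\neq y}}\otimes\frac{I_{A_x}}{d_{A_x}},
\end{align*}
so the left-hand side has rank exactly $\rank(O_{A_{\neq x},B_{\neq y}})\cdot d_{A_x}$. On the other hand, writing $O=\sum_\alpha\ketbra{v_\alpha}$ in its spectral decomposition with $\rank(O)$ terms and Schmidt-decomposing each $\ket{v_\alpha}$ across the bipartition $B_y$ versus the rest shows that each $\Tr_{B_y}[\ketbra{v_\alpha}]$ has rank at most $d_{B_y}$, so $\rank(\Tr_{B_y}[O])\le\rank(O)\cdot d_{B_y}$. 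Equating the two bounds and invoking the dimensional hypothesis $d_{A_i}=d_{B_j}$ gives $\rank(O_{A_{\neq x},B_{\neq y}})\le\rank(O)$, that is, the updated channel has Kraus rank at most that of $\map{O}$. Iterating from the base case $\map{O}=\map{C}$ yields the chain $\rank(\map{O})\le\rank(\map{C})\le d_M$ throughout the execution.

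The delicate step is the cancellation between the worst-case growth factor $d_{B_y}$ coming from tracing out the output $B_y$ and the structural factor $d_{A_x}$ extracted from the comb condition on the Choi state. Without the dimensional hypothesis $d_{A_i}=d_{B_j}$, or without the independence test providing the factorization of \autoref{prop:comb}, partial tracing could strictly increase the Choi rank and the inductive bound would break; the proof essentially works because the algorithm only trims a tooth whose input dimension matches the output dimension being discarded, and only after certifying the factorization that makes this trimming rank-non-increasing.
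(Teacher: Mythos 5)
Your proof is correct and follows essentially the same route as the paper's: both establish $\rank(\map{C})\le d_M$ from the unitary dilation, and both show the update is rank-non-increasing by combining the factorization $\Tr_{B_y}[O]=O_{A_{\neq x},B_{\neq y}}\otimes I_{A_x}/d_{A_x}$ (which the paper phrases as the constant channel $\map{P}(X)=\Tr[X]\,O^{(k-1)}$ together with its \autoref{lem:rank}) with the bound $\rank(\Tr_{B_y}[O])\le d_{B_y}\rank(O)$, cancelling $d_{A_x}$ against $d_{B_y}$. The only difference is presentational --- you work directly on Choi states and make the Schmidt-decomposition bound explicit where the paper leaves it implicit.
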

\begin{proof}
    At the beginning of \autoref{alg:unitary}, $\rank(\map{O}) \leq \rank(\map{C})$ trivially holds since $\map{O}=\map{C}$. Now we show that in each iteration, updating the definition of $\map{O}$ does not increase its rank. Let $\map{O}^{(k)}$ be the value of $\map{O}$ at the beginning of iteration $k$. Assume $\rank(\map{O}^{(k)}) \leq \rank(\map{C})$ and consider $\map{O}^{(k-1)}$ which is the value after the update in iteration $k$.

    Conditioned on that each independence test produces the correct answer, \autoref{func:findlast} produces a pair of wires $(A_x,B_y)$ satisfying that the following channel $\map{P}: \spc{H}_{A_x} \to \spc{H}_{B_{\neq y}} \otimes \spc{H}_{A'_{\neq x}}$ is a constant channel:
    $$\map{P}(\rho_{A_x}) := \Tr_{B_y} \left[ \left( \map{O}^{(k)}\otimes\map{I_{A'_{\neq x}}} \right) \left( \bigotimes_{k \neq x} \frac{\kketbbra{I}_{A_kA'_k}}{d_{A_k}} \otimes \rho_{A_x} \right) \right]$$
    \autoref{fig:rhob} shows the circuit of $\map{P}$ with input $\psi_b$.

    Since $\map{P}$ is a constant channel, $\map{P}(X) = \Tr[X] O^{(k-1)}$, where $O^{(k-1)}$ equals to the Choi state of the updated channel $\map{O}^{(k-1)}$. According to \autoref{lem:rank}, $\rank(\map{O}^{(k-1)})=\rank(O^{(k-1)}) = \rank(\map{P})/ \dim\spc{H}_{A_x} \leq \rank(\map{O}^{(k)})\dim\spc{H}_{B_y}/ \dim\spc{H}_{A_x} =  \rank(\map{O}^{(k)})\leq \rank(\map{C})$, where we used $\dim\spc{H}_{A_x} = \dim\spc{H}_{B_y} = d_A$ in \autoref{ass:unitary}. Since $\map{C}$ is composed of only unitary gates with a $d_M$-dimensional system traced out, we have $\rank(\map{C}) \leq d_M$.
\end{proof}

Now we prove \autoref{thm:unitary}.
\begin{proof}[Proof of \autoref{thm:unitary}]
We work under the condition that all SWAP tests are successful, namely all estimates the SWAP test produces are within error $\varepsilon$, which has probability no less than $1-\kappa$ according to \autoref{lem:SWAP}. There are at most $2 n d_A^2$ SWAP tests, and the probability that all of them succeeds is no less than $1- 2nd_A^2\kappa$. Let $\kappa = \kappa_0/2nd_A^2$, and from now we assume all SWAP tests are successful, which has probability no less than $1-\kappa_0$.

If \autoref{func:findlast} outputs $(i,j)$, we know that for all $k$, $p_{1} + p_{k} - 2 p_{1k} \leq \delta$. Since $p_{1} + p_{k} - 2 p_{1k}$ is an estimate for $\Tr[\rho_1^2]+\Tr[\rho_k^2]-2\Tr[\rho_1\rho_k] = \|\rho_1 - \rho_k\|_2^2$ up to error $4\varepsilon$, we infer that  $\|\rho_1 - \rho_k \|_2^2 \leq  \delta + 4\varepsilon$.

According to \autoref{ass:unitary}, in \autoref{func:findlast}, if $(A_i,B_j)$  is the last tooth, we must have $\|\rho_1 - \rho_k \|_2=0, \forall k$. By setting $\varepsilon = \delta/4$, for the last tooth, we can ensure that  $p_{1} + p_{k} - 2 p_{1k} \leq \|\rho_1 - \rho_k \|_2^2 + 4\varepsilon = \delta$, the last tooth must pass the independence test, and \autoref{func:findlast} will not fail.

Let $\varepsilon = \delta/4$, we have $\|\rho_1 - \rho_k \|_2 \leq \sqrt{2\delta}$. Let $O$ be the Choi state of $\map{O}$. According to \autoref{lem:2norm}, we have
\begin{align}
    \|\Tr_{B_j}[O] - D\|_1 \leq \sqrt{2 d_{B_j} \rank(O)} ~ d_{A_i}^2 \lambda_{\min}^{-1/2} \sqrt{2\delta} \leq 2\sqrt{d_M \delta} ~ d_{A}^{5/2} \lambda_{\min}^{-1/2} \,,
\end{align}
where $\map{D} \in \set{Comb}[(A_{\neq i},B_{\neq_j}),(A_i,\emptyset)]$ and $A_{\neq i}$ ($B_{\neq_j}$) is the set of input (output) wires of $\map{O}$ excluding $A_i$ ($B_j$). The last inequality follows from \autoref{lem:rank2}, which asserts $\rank(O) \leq d_M$.

According to \autoref{lem:2ton}, we have
\begin{align}
    \exists \map{D} \in \set{Comb}[(A_{\sigma(1)},B_{\pi(1)}),\dots,(A_{\sigma(n)},B_{\pi(n)})], ~ \|C - D\|_1 \leq (4n-6)\sqrt{2\sqrt{2 d_M} d_A^{5/2} \lambda_{\min}^{-1/2} \delta} \,.
\end{align}
The right hand side becomes $\varepsilon_0$ if we choose
\begin{equation}
    \delta = \frac{ \lambda_{\min} \varepsilon_0^4}{4(4n-6)^4 d_M d_A^{5}}
\end{equation}

Now we analyze the complexity of \autoref{alg:unitary}. \autoref{func:findlast} is called $n$ times, and each time it invokes $O(n^2 d_A^2)$ number of SWAP tests. Each SWAP test runs the circuit $N$ times, which is
\begin{align}
    N = \lceil 2\varepsilon^{-2}\log(2/\kappa) \rceil = O\left(n^8 d_A^{10} d_M^2 \varepsilon_0^{-8} \lambda_{\min}^{-2} \log (n d_A \kappa_0^{-1})\right) \,.
\end{align}
Since there are $O(n^3 d_A^2)$ number of SWAP tests, the total query complexity is
\begin{align}T_{\rm query}=O\left(n^{11} d_A^{12} d_M^2 \varepsilon_0^{-8} \lambda_{\min}^{-2} \log (n d_A \kappa_0^{-1})\right)\end{align}

The processing time is also dominated by the SWAP tests. Assuming the input and output systems are represented with qubits, each wire requires $O(\log d_A)$ qubits and the inputs and outputs total to $O(n\log d_A)$ qubits.
Each run of the circuit in \autoref{fig:SWAP} requires applying a controlled-SWAP gate on $O(n)$ wires, which could be decomposed as $O(n\log d_A)$ controlled-SWAP gates between qubits. Assuming each qubit controlled-SWAP gate takes constant time, the total processing time is $O(T_{\rm query}n\log d_A)$.

\end{proof}

\section{Proof of \autoref{lem:alg_ind}}\label{app:alg_ind}


\begin{lem} \label{lem:2normbounds}
    Let $\{P_\alpha\}$ be an informationally complete POVM and $F :=\sum_\alpha \kketbbra{P_\alpha}$ be its frame operator. For two Hermitian operators $\rho$ and $\sigma$, define $p_\alpha:=\Tr[P_\alpha \rho]$ and $q_\alpha:=\Tr[P_\alpha \sigma]$. Then one has
    \begin{align} \label{eq:2normbounds}
        \frac{\sum_{\alpha}(p_\alpha-q_\alpha)^2}{\lambda_{\max}(F)} \leq \|\rho - \sigma\|_2^2 \leq \frac{\sum_{\alpha}(p_\alpha-q_\alpha)^2}{\lambda_{\min}(F)} \,,
    \end{align}
    where $\|X\|_2:= \sqrt{\Tr[X^\dag X]}$ denotes the Schatten 2-norm, also known as the Frobenius norm, and $\lambda_{\max}(F)$ and $\lambda_{\min}(F)$ are the maximum and minimum eigenvalues of $F$, respectively.
\end{lem}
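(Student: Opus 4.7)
The plan is to reduce both sides of the claimed inequality to quadratic forms in the vectorized difference $\kket{\rho-\sigma}$, and then extract the bounds directly from the Rayleigh quotient characterization of the extreme eigenvalues of the frame operator $F$. The key identity powering this is $\bbrakket{Y}{X} = \Tr[Y^{\dagger}X]$ for the double-ket map $\kket{X} := (X\otimes I)\kket{I}$ used throughout the paper.

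First I would rewrite the probability differences via vectorization. Since $P_\alpha$, $\rho$, $\sigma$ are all Hermitian, $p_\alpha - q_\alpha = \Tr[P_\alpha (\rho - \sigma)]$ is real, and using the inner product identity above,
\begin{align}
p_\alpha - q_\alpha = \bbrakket{P_\alpha}{\rho - \sigma}.
\end{align}
Squaring and summing then collapses into a quadratic form in $F$:
\begin{align}
\sum_\alpha (p_\alpha - q_\alpha)^2 \;=\; \sum_\alpha \bbrakket{\rho-\sigma}{P_\alpha}\bbrakket{P_\alpha}{\rho-\sigma} \;=\; \bbra{\rho-\sigma} F \kket{\rho-\sigma}.
\end{align}
On the other side, the Hilbert--Schmidt norm is exactly the length of the vectorized operator, $\|\rho-\sigma\|_2^2 = \Tr[(\rho-\sigma)^2] = \bbrakket{\rho-\sigma}{\rho-\sigma}$.

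Second, since $F$ is Hermitian and positive (strictly so, because informational completeness of $\{P_\alpha\}$ means $\{\kket{P_\alpha}\}$ spans $\spc{H}\otimes\spc{H}$, as already noted following \autoref{lem:IC}), the Rayleigh quotient bounds give
\begin{align}
\lambda_{\min}(F)\,\bbrakket{\rho-\sigma}{\rho-\sigma} \;\leq\; \bbra{\rho-\sigma} F \kket{\rho-\sigma} \;\leq\; \lambda_{\max}(F)\,\bbrakket{\rho-\sigma}{\rho-\sigma}.
\end{align}
Substituting the two identifications above and dividing by $\lambda_{\max}(F)$ and $\lambda_{\min}(F)$ respectively delivers Eq. (\ref{eq:2normbounds}).

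There is no substantive obstacle here: the lemma is really just a change of variables combined with a Rayleigh-quotient estimate. The only bookkeeping care is confirming $\lambda_{\min}(F) > 0$, which permits the division in the upper bound; this is precisely informational completeness. A minor alternative presentation would phrase $F$ as the Gram matrix of the vectors $\{\kket{P_\alpha}\}$ and invoke the standard fact that $\lambda_{\min}(F)$ and $\lambda_{\max}(F)$ control the frame inequality $\lambda_{\min}(F)\|v\|^2 \leq \sum_\alpha |\bbrakket{P_\alpha}{v}|^2 \leq \lambda_{\max}(F)\|v\|^2$ applied to $v = \kket{\rho-\sigma}$, which is exactly what the calculation above amounts to.
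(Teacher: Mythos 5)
Your proof is correct and follows essentially the same route as the paper's: both identify $\sum_\alpha(p_\alpha-q_\alpha)^2$ with the quadratic form $\bbra{\rho-\sigma}F\kket{\rho-\sigma}$ (the paper does this via the auxiliary operator $S=\sum_\alpha\ket{\alpha}\bbra{P_\alpha}$ with $S^\dag S=F$, which is just your Gram-matrix remark in different notation) and then apply the Rayleigh-quotient bounds together with $\bbrakket{\rho-\sigma}{\rho-\sigma}=\|\rho-\sigma\|_2^2$. No gaps; your extra observation that $\lambda_{\min}(F)>0$ by informational completeness is a point the paper also notes separately after \autoref{lem:IC}.
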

\begin{proof}
    Let $T := \rho - \sigma$ and $t_{\alpha} := p_\alpha - q_\alpha$. Then $t_\alpha = \Tr[P_\alpha \rho]-\Tr[P_\alpha \sigma] = \Tr[P_\alpha T]$. Define the vector $\ket{t} := \sum_\alpha t_\alpha \ket{\alpha}$, and operator $S := \sum_\alpha \ket{\alpha}\bbra{P_\alpha}$, where $\{\ket\alpha\}$ is a set of orthonormal vectors indexed by $\alpha$. Then one has $\ket{t} = S\kket{T}$ and $S^\dag S = F$.
    \begin{align}
        \sum_\alpha t_\alpha^2 =  \braket{t}{t} = \bbra{T} S^\dag S \kket{T} = \bbra{T} F \kket{T}\,.
    \end{align}
    Since $\lambda_{\min}(F)\bbrakket{T}{T} \leq \bbra{T} F \kket{T} \leq \lambda_{\max}(F)\bbrakket{T}{T}$, we obtain
    \begin{align}
        \lambda_{\min}(F)\bbrakket{T}{T} \leq \sum_\alpha t_\alpha^2 \leq \lambda_{\max}(F)\bbrakket{T}{T} \,,
    \end{align}
    which is equivalent to Eq. (\ref{eq:2normbounds}) since $\bbrakket{T}{T} = \|T\|_2^2$.
\end{proof}

\begin{proof}[Proof of \autoref{lem:alg_ind}]
Let $\hat\rho_A := \Tr_B[\hat\rho_{AB}]$ and $\hat\rho_B := \Tr_A[\hat\rho_{AB}]$ be the marginal states of the reconstructed state $\hat\rho_{AB}$.
Define $\tau := \rho_{AB} - \rho_A \otimes \rho_B$ and $\hat\tau := \hat\rho_{AB} - \hat\rho_A \otimes \hat\rho_B$, the error of \autoref{alg:c1} is then the difference between $\|\tau\|_1=\chi_1(\rho_{AB})$ and $\|\hat{\tau}\|_1=\chi_1(\hat\rho_{AB})$. 

Let $p_{\alpha\beta}:= \Tr[(P_\alpha \otimes Q_\beta)\rho_{AB}]$, $p^A_{\alpha}:=\sum_\beta p_{\alpha\beta}$ and $p^B_{\beta} = \sum_\alpha p_{\alpha\beta}$. Let $\hat{p}^A_{\alpha}:=\sum_\beta \hat{p}_{\alpha\beta}$ and $\hat{p}^B_{\beta} = \sum_\alpha \hat{p}_{\alpha\beta}$. Over the $N$ measurement outcomes, $\hat{p}_{\alpha\beta}$ is the average of independent Bernouli random variables, and has mean $p_{\alpha\beta}$. The same is true for $\hat{p}^A_{\alpha}$ and $\hat{p}^B_{\beta}$, whose means are $p^A_{\alpha}$ and $p^B_{\beta}$, respectively.

Let $\varepsilon_1:= \xi\varepsilon$. By Hoeffding's inequality,
\begin{align}
    \Pr(|p_{\alpha\beta} - \hat{p}_{\alpha\beta}| \leq \varepsilon_1) &\geq 1-2 e^{-2\varepsilon_1^2 N} \\
    \Pr(|p^A_{\alpha} - \hat{p}^A_{\alpha}| \leq \varepsilon_1) &\geq 1-2 e^{-2\varepsilon_1^2 N}\\
    \Pr(|p^B_{\beta} - \hat{p}^B_{\beta}| \leq \varepsilon_1) &\geq 1-2 e^{-2\varepsilon_1^2 N}
\end{align}
By the union bound,
\begin{align}
    \Pr\left[ \left(\forall\alpha,\beta, |p_{\alpha\beta} - \hat{p}_{\alpha\beta}| \leq \varepsilon_1\right) \wedge \left(\forall \alpha, |p^A_{\alpha} - \hat{p}^A_{\alpha}| \leq  \varepsilon_1\right) \wedge \left(\forall \beta, |p^B_{\beta} - \hat{p}^B_{\beta}| \leq \varepsilon_1\right)  \right] \geq 1-2 (d_A^2d_B^2+d_A^2+d_B^2) e^{-2\varepsilon_1^2 N},
\end{align}
namely with probability no less than $1-2 (d_A^2d_B^2+d_A^2+d_B^2) e^{-2\varepsilon_1^2 N} = 1-\kappa(\varepsilon)$, all $(d_A^2d_B^2+d_A^2+d_B^2)$ number of the following inequalities
\begin{align}
    |p_{\alpha\beta} - \hat{p}_{\alpha\beta}| \leq & \varepsilon_1, ~\forall \alpha,\beta \label{eq:phat1} \\
    |p^A_{\alpha} - \hat{p}^A_{\alpha}| \leq & \varepsilon_1, ~\forall \alpha\\
    |p^B_{\beta} - \hat{p}^B_{\beta}| \leq & \varepsilon_1, ~\forall \beta \label{eq:phat3}
\end{align}
are satisfied. We then analyze the estimation error of $\chi_1(\rho_{AB})$ in this case. Define $t_{\alpha\beta}:=p_{\alpha\beta} - p^A_{\alpha}p^B_{\beta}$. By Eqs. (\ref{eq:phat1}-\ref{eq:phat3}),
\begin{align}
|t_{\alpha\beta} - \hat{t}_{\alpha\beta}|  & = |p_{\alpha\beta} - \hat{p}_{\alpha\beta} - p^A_{\alpha}p^B_{\beta} + \hat{p}^A_{\alpha}\hat{p}^B_{\beta}| \\
&\leq  |p_{\alpha\beta} - \hat{p}_{\alpha\beta}| + |p^A_{\alpha}p^B_{\beta} - p^A_{\alpha}\hat{p}^B_{\beta}|  + |p^A_{\alpha}\hat{p}^B_{\beta} - \hat{p}^A_{\alpha}\hat{p}^B_{\beta}| \\
&\leq  \varepsilon_1 + p^A_{\alpha}|p^B_{\beta} -\hat{p}^B_{\beta}|  + |p^A_{\alpha} - \hat{p}^A_{\alpha}| \hat{p}^B_{\beta} \\
&\leq  (1+p^A_{\alpha} + \hat{p}^B_{\beta})\varepsilon_1
\end{align}

Since $\sum_\alpha p^A_{\alpha} = \sum_\beta \hat{p}^B_{\beta} = 1$, we have
\begin{align}
\sum_{\alpha,\beta} (t_{\alpha\beta} - \hat{t}_{\alpha\beta})^2 &\leq \sum_{\alpha,\beta} (1+p^A_{\alpha} + \hat{p}^B_{\beta})^2\varepsilon_1^2\\
&\leq  \sum_{\alpha,\beta} (1+4p^A_{\alpha} + 4\hat{p}^B_{\beta}) \varepsilon_1^2\\
& = (d_A^2 d_B^2 + 4d_B^2 + 4d_A^2) \varepsilon_1^2
\end{align}

Note that $t_{\alpha\beta} = \Tr[(P_\alpha\otimes Q_\beta)\tau]$ and $\hat{t}_{\alpha\beta} = \Tr[(P_\alpha\otimes Q_\beta)\hat{\tau}]$. Applying \autoref{lem:2normbounds} to POVM $\{P_\alpha\otimes Q_\beta\}$ and operators $\tau$ and $\hat{\tau}$, we obtain
\begin{align}
\|\tau - \hat{\tau}\|_2^2 \leq \frac{\sum_{\alpha,\beta}\left(t_{\alpha\beta}-\hat{t}_{\alpha\beta}\right)^2}{\lambda_{\min}(F\otimes G)} \leq \frac{(d_A^2 d_B^2 + 4d_B^2 + 4d_A^2) \varepsilon_1^2}{\lambda_{\min}(F)\lambda_{\min}(G)}
\end{align}

Using Eq. (\ref{eq:norm12}), we have
\begin{align}
\|\tau - \hat{\tau}\|_1 \leq \sqrt{\rank(\tau - \hat{\tau})}\|\tau - \hat{\tau}\|_2 \leq d_Ad_B \sqrt{\frac{d_A^2 d_B^2 + 4d_B^2 + 4d_A^2 }{\lambda_{\min}(F)\lambda_{\min}(G)}} \, \varepsilon_1 = \varepsilon
\end{align}
which proves Eq. (\ref{eq:c1_bound}) since $| \|\tau\|_1 - \|\hat{\tau}\|_1 | \leq \|\tau - \hat\tau\|_1$.
\end{proof}

\section{Proof of \autoref{thm:memoryless}} \label{app:memoryless}

Picking $\chi_- = \varepsilon_0$, according to \autoref{lem:ind}, with probability $1-n^2 \kappa_0$,
\begin{enumerate}
    \item if $\ind_{i,j} = \False$, then $\chi_1(C_{A_i,B_j}) > 0$
    \item if $\ind_{i,j} = \True$, then $\chi_1(C_{A_i,B_j}) \leq 2\chi_-$
\end{enumerate}
Since by \autoref{ass:memoryless}, for every input $A_i$, there is at most one output $B_j$ such that $\chi_1(C_{A_i,B_j}) > 0$, and therefore there is at most one $j$ satisfying $\ind_{i,j}=\False$. Let $A_{\rm matched} := \{A_i | \exists j, \ind_{i,j}=\False \}$ and $B_{\rm matched}:=\{ B_{\pi(i)} | A_i \in A_{\rm matched}\}$ be the inputs and outputs that are matched up to line \ref{line:memoryless_match} of \autoref{alg:memoryless}. 

Define the channel $\map{D}$ by its action on product states as follows:
\begin{align}
    \map{D}(\rho_{A_1} \otimes \dots \otimes \rho_{A_n}) :=  \bigotimes_{i \in A_{\rm matched}} \map{C}_{A_i \to B_{\pi(i) }}(\rho_{A_i}) \otimes   \bigotimes_{i \notin A_{\rm matched}}  \map{C}_{A_i \nrightarrow B_{\pi(i)} }(\rho_{A_i})
\end{align}
where $\map{C}_{A_i \to B_{\pi(i) }}(\rho_{A_i}) := \Tr_{B_{\neq \pi(i)}}[\map{C}(\rho_{A_i} \otimes I_{A_{\neq i}}/d_{A_{\neq i}})]$ is channel $\map{C}C$ restricted on input $A_i$ and output $B_{\pi(i)}$, and $\map{C}_{A_i \nrightarrow B_{\pi(i)} }(\rho_{A_i}) := \Tr[\rho_{A_i}] C_{B_{\pi(i)}}$ is a constant channel that outputs $C_{B_{\pi(i)}}$, the marginal Choi state of $\map{C}$ on system $B_{\pi(i)}$.

Clearly, $\map{D} \in \set{Comb}[ (A_{1}, B_{\pi(1)}), \dots , (A_{n}, B_{\pi(n)}) ]$, since it is a tensor product of channels from $A_{i}$ to $B_{\pi(i)}$. Now we show that the difference between $\map{C}$ and $\map{D}$ is small.

According to \autoref{ass:memoryless}, $\map{C} = \bigotimes_{i=1}^n \map{C}_{A_i \to B_{\pi''(i)}} = \bigotimes_{i \in A_{\rm matched}} \map{C}_{A_i \to B_{\pi''(i) }} \otimes   \bigotimes_{i \notin A_{\rm matched}}  \map{C}_{A_i \to B_{\pi''(i) }}$.
For channel $\map{C}$ and each $i \in A_{\rm matched}$, $A_{i}$ is correlated to $B_{\pi(i)}$ but not correlated to all other outputs, and by the way $\pi$ is computed, one must have $\pi(i)=\pi''(i)$. Therefore,
\begin{align}
    \| \map{C} - \map{D} \|_\diamond & = \left\|\bigotimes_{i \in A_{\rm matched}} \map{C}_{A_i \to B_{\pi''(i) }} \otimes   \bigotimes_{i \notin A_{\rm matched}}  \map{C}_{A_i \to B_{\pi''(i) }} -  \bigotimes_{i \in A_{\rm matched}} \map{C}_{A_i \to B_{\pi(i) }} \otimes   \bigotimes_{i \notin A_{\rm matched}} \map{C}_{A_i \nrightarrow B_{\pi(i)} } \right\|_\diamond \\
    & = \left\|\bigotimes_{i \notin A_{\rm matched}}  \map{C}_{A_i \to B_{\pi''(i) }}  -  \bigotimes_{i \notin A_{\rm matched}} \map{C}_{A_i \nrightarrow B_{\pi(i)} }  \right\|_\diamond \label{eq:CABCAB}
\end{align}
and the difference between $\map{C}$ and $\map{D}$ only occurs in the second part involving $i \notin A_{\rm matched}$. Since
\begin{align}
    \bigotimes_{i \notin A_{\rm matched}}  \Tr[\rho_{A_i}] C_{B_{\pi(i)}} = \prod_{i \notin A_{\rm matched}}  \Tr[\rho_{A_i}] \bigotimes_{j \notin B_{\rm matched}} C_{B_j} = \bigotimes_{i \notin A_{\rm matched}}  \Tr[\rho_{A_i}] C_{B_{\pi''(i)}}
\end{align}
we have $\bigotimes_{i \notin A_{\rm matched}}  \map{C}_{A_i \nrightarrow B_{\pi(i) }} = \bigotimes_{i \notin A_{\rm matched}}  \map{C}_{A_i \nrightarrow B_{\pi''(i) }}$, where $\map{C}_{A_i \nrightarrow B_{\pi''(i) }}$ is the constant channel defined as $\map{C}_{A_i \nrightarrow B_{\pi''(i)} }(\rho_{A_i}) := \Tr[\rho_{A_i}] C_{B_{\pi''(i)}}$. Eq. (\ref{eq:CABCAB}) then becomes
\begin{align} \label{eq:CDdia}
    \| \map{C} - \map{D} \|_\diamond & = \left\|\bigotimes_{i \notin A_{\rm matched}}  \map{C}_{A_i \to B_{\pi''(i) }}  -  \bigotimes_{i \notin A_{\rm matched}} \map{C}_{A_i \nrightarrow B_{\pi''(i)} }  \right\|_\diamond \leq \sum_{i \notin A_{\rm matched}} \| \map{C}_{A_i \to B_{\pi''(i) }} - \map{C}_{A_i \nrightarrow B_{\pi''(i)} } \|_\diamond
\end{align}


Consider the channel $\map{C}_{A_i \to B_{\pi''(i) }}$. For $i \notin A_{\rm matched}$, $\ind_{i,\pi''(i)} = \True$ and with high probability $\chi_1(C_{A_i,B_{\pi''(i)}}) \leq 2\chi_-$. By definition of $\chi_1$,
\begin{align}\label{eq:diff_choi_leq_chi}
    \| C_{A_i,B_{\pi''(i)}} - C_{A_i}\otimes C_{B_{\pi''(i)}} \|_1 = \left\| C_{A_i,B_{\pi''(i)}} - \frac{I_{A_i}}{d_{A_i}}\otimes C_{B_{\pi''(i)}} \right\|_1 \leq 2\chi_-
\end{align}
Eq. (\ref{eq:diff_choi_leq_chi}) is the distance between the Choi states of $\map{C}_{A_i \to B_{\pi''(i)} }$ and $\map{C}_{A_i \nrightarrow B_{\pi''(i)}}$, which gives a bound of the diamond-norm distance between the channels:
\begin{align}\label{eq:CCdchi}
    \| \map{C}_{A_i \to B_{\pi''(i) }} - \map{C}_{A_i \nrightarrow B_{\pi''(i)} } \|_\diamond \leq d_{A_i} \left\| C_{A_i,B_{\pi''(i)}} - \frac{I_{A_i}}{d_{A_i}}\otimes C_{B_{\pi''(i)}} \right\|_1 \leq 2d_{A_i} \chi_-
\end{align}

Combining Eqs. (\ref{eq:CDdia}) and (\ref{eq:CCdchi}), we obtain
\begin{align}
    \| \map{C} - \map{D} \|_\diamond \leq 2 \chi_- \sum_{i \notin A_{\rm matched}}d_{A_i} \leq 2 n d_A \varepsilon_0
\end{align}
where $d_A := \max_i d_{A_i}$. The query complexity of \autoref{alg:memoryless} is given by \autoref{lem:ind} as
\begin{align}
    N = O\left(d_A^4 d_B^4 \lambda_{\min}^{-2} \varepsilon_0^{-2} \log(d_A d_B \kappa_0^{-1}) \right)
\end{align}
Let $\kappa := n^2 \kappa_0$ and $\varepsilon := 2 n d_A \varepsilon_0$, we have
\begin{align}
    N = O\left(n^2 d_A^6 d_B^4 \lambda_{\min}^{-2} \varepsilon^{-2} \log(n d_A d_B \kappa^{-1}) \right)
\end{align}
and with probability $1-\kappa$,  $\| \map{C} - \map{D} \|_\diamond \leq \varepsilon$. The running time of \autoref{alg:memoryless} is in the same form as \autoref{alg:order_linear}.

\end{document}